\newcommand{\squeeze}{
\textwidth 6in \textheight 8.8in \oddsidemargin 0.2in \topmargin
-0.4in }
\newcommand{\ceiling}[1]{\left\lceil#1\right\rceil}
\newcommand{\scrod}{\quad\nopagebreak}
\newtheorem{lemma}{Lemma}
\newtheorem{theorem}[lemma]{Theorem}
\newtheorem{definition}{Definition}
\newenvironment{proof}
{\bigskip\noindent\textbf{Proof~}} {\marginpar{$\Box$}\bigskip}
\newcommand{\Si}{{\rm S}}
\newcommand{\RP}{{\rm RP}}
\newcommand{\subE}{{\rm subE}}
\newcommand{\NP}{{\rm NP}}
\newcommand{\Pnew}{{\rm P}}
\begin{document}%\baselineskip 20pt

%\setcounter{page}{0}

%\end{document}

\date{}

\title{Multivariate Polynomial Integration and Derivative Are
%any Factor
 Polynomial Time Inapproximable Unless P=NP\thanks{This research is
supported in part by National Science Foundation Early Career Award
0845376.}}

\author{
Bin Fu
 \\ \\
Department of Computer Science\\
 University of Texas-Pan American\\
Edinburg, TX  78539, USA\\
binfu@cs.panam.edu\\\\
%\\{\bf (Not for Distribution)}
} \maketitle

\begin{abstract} We investigate the complexity of integration and
derivative for multivariate polynomials in the standard computation
model. The integration is in the unit cube $[0,1]^d$ for a
multivariate polynomial, which  has format $f(x_1,\cdots,
x_d)=p_1(x_1,\cdots, x_d)p_2(x_1,\cdots, x_d)\cdots p_k(x_1,\cdots,
x_d)$, where each $p_i(x_1,\cdots, x_d)=\sum_{j=1}^d q_j(x_j)$ with
all single variable polynomials $q_j(x_j)$ of  degree at most two
and constant coefficients. We show that there is no any factor
polynomial time approximation for the integration
$\int_{[0,1]^d}f(x_1,\cdots,x_d)d_{x_1}\cdots d_{x_d}$ unless
$\Pnew=\NP$.  For the complexity of multivariate derivative, we
consider the functions with the format $f(x_1,\cdots,
x_d)=p_1(x_1,\cdots, x_d)p_2(x_1,\cdots, x_d)\cdots p_k(x_1,\cdots,
x_d),$ where each $p_i(x_1,\cdots, x_d)$ is of degree at most $2$
and $0,1$ coefficients. We also show that unless $\Pnew=\NP$, there
is no any factor polynomial time approximation to its derivative
${\partial f^{(d)}(x_1,\cdots, x_d)\over
\partial x_1\cdots
\partial x_d}$ at the origin point $(x_1,\cdots, x_d)=(0,\cdots,0)$. Our $\#P$-hard result for derivative  shows that the
 derivative is not be easier than the integration in high
 dimension.  We also give some
tractable cases of high dimension integration and derivative.
\end{abstract}

% {\bf The results of this paper are wrong. The proof did not
% consider the case that a variable appears multiple times}

%\newpage

%\end{document}

\section{Introduction}

Integration and derivative are basic operations in the classical
mathematics. Integrations with a large number of variables have been
found applications in many areas such as finance, nuclear physics,
and quantum system, etc. The complexity for approximating
multivariate integration has been studied by measuring the number of
function evaluations. For example, Sloan and Wozniakowski proved an
exponential lower bounds $2^s$ of function evaluations in order to
obtain an approximation with error less than the integration itself,
which has $s$ variables~\cite{SloanWozniakowski97}. The integration
$\int_{[0,1]^s} f(x_1,\cdots,x_s) d_{x_1}\cdots d_{x_s}$ is over the
cubic $[0,1]^s$ for some function $f(x_1,\cdots, x_s)$.
 %$\in E_{\alpha,s}$. $E_{\alpha,s}$ is the set of complex-valued functions
%in $L_1[0,1]^s$, whose Fourier coefficient satisfy condition
%$|\hat{f}(h)|\le {1\over \hat{h_1}\cdots\hat{h_s}}$, where
%$\hat{f}(h)=\int_{[0,1]^s}f(x)e^{-2\pi ih\cdot x}dx$, and $h\cdot
%x=\sum_{i=1}^s h_i x_i$ and $\hat{h_i}=\max(1,|h_i|)$.
In the quasi-Monte Carlo method for computing
$\int_{[0,1]^d}f(x)d_x$, it is approximated by ${1\over
n}\sum_{i=1}^n f(x_i)$. This approximation has an error
$\Theta({(\ln n)^{d-1}\over n})$, which grows exponentially on the
dimension number $d$ (see e.x.,~\cite{Niederreiter92,Niederreiter78}
and the reference papers there).

An integration may be computed by the structure of the function
without involving function evaluation. For example,
$\int_{[0,1]^2}x^2y^3d_xd_y=(\int_{[0,1]}x^2d_x)\cdot
(\int_{[0,1]}y^3d_y)={1\over 3}\cdot {1\over 4}={1\over 12}$. The
calculation gives the exact value of the integration, but there is
no evaluation for the function $f(x,y)=x^2y^3$.
%Many integrations
%have analytic indefinite solutions, which are often derived without
%function evaluation.
Using the computational complexity theory, we study the polynomial
time approximation limitation for the high dimensional integration
for some easily defined functions. In this paper, we consider the
high dimensional integration for multivariate polynomials, which are
defined with format $f(x_1,\cdots, x_d)=p_1(x_1,\cdots,
x_d)p_2(x_1,\cdots, x_d)\cdots p_k(x_1,\cdots, x_d)$, where each
$p_i(x_1,\cdots, x_d)=\sum_{j=1}^d q_j(x_j)$ with polynomial
$q_j(x_j)$ of constant degree. Its integration can be computed in
polynomial space. We show how this problem is related to other hard
problem in the field of computational complexity theory. Therefore,
our model for studying the complexity of high dimensional
integration is totally different from the existing approaches such
as~\cite{SloanWozniakowski97}, and is more general than the old
models. We  show that there is no any factor polynomial time
approximation to the integration problem unless $\Pnew=\NP$.

A similar hardness of approximation result is also derived for the
derivative of the polynomial function. The recent development of
monomial testing
theory~\cite{ChenFu10a,ChenFuLiuSchweller10,ChenFu10c} can be used
to explain the hardness for computing the derivative for a
$\prod\sum\prod$ polynomial.
%Using this theory, we obtain the
%hardness result for computing the derivative of the multivariate polynomial.
For the hardness of the approximation for multivariate derivative,
we consider the functions with the format $f(x_1,\cdots,
x_d)=p_1(x_1,\cdots, x_d)p_2(x_1,\cdots, x_d)\cdots p_k(x_1,\cdots,
x_d),$ where each $p_i(x_1,\cdots, x_d)$ is of degree $2$. We also
show that unless $\Pnew=\NP$, there is no any factor polynomial time
approximation to its derivative ${\partial f^{(d)}(x_1,\cdots,
x_d)\over
\partial x_1\cdots
\partial x_d}$ at the origin point $(x_1,\cdots, x_d)=(0,\cdots,0)$. Our results show that the
 high dimension derivative may not be easier than the high dimension
 integration. Since both integration and derivative are widely used,
 this approach may help understand the
 complexity of some mathematics systems that involve high dimension integration or
 derivative.

Partial derivatives were used in developing deterministic algorithms
for the polynomial identity problem (for example,
see~\cite{ShpilkaVolkovich08}), a fundamental problem in the
computational complexity theory. Our intractability result for the
high dimension derivative over multivariate polynomial points out a
barrier of this approach.

Second part of this paper about the inapproximability of derivative
is an application of our recently developed monomial testing
theory~\cite{ChenFu10a,ChenFuLiuSchweller10,ChenFu10c}. It shows
that it is $\#\Pnew$-hard to compute the derivative of a $\prod\sum$
polynomial at the origin point.  We also give some tractable cases
of high dimension integration and derivative.

In section~\ref{overview-sec}, we give an overview about our method
for deriving the inapproximation result of high dimension
integration. The main result of this paper is the inapproximation
for high dimension integration, and is presented in
section~\ref{inapp-integration-sec}. In
section~\ref{inapprox-derivative-sec}, we present the
inapproximation result for high order derivative. Some tractable
cases of high dimension integration and derivative are shown in
section~\ref{tractable-sec}.

\section{Notations}\label{notation-sec}

Let $N=\{0,1,2,\cdots\}$ be the set of all natural numbers. Let
$N^+=\{1,2,\cdots\}$ be the set of all positive natural numbers.

Assume that function $r(n)$ is from $N$ to $N^+$. For a functor
$F(.)$, which converts a multivariate polynomial into a real number,
an algorithm $A(.)$ gives an $r(n)$-factor approximation to $F(f)$
if it satisfies the following conditions:
%\begin{itemize}
%\item
if $F(f)\ge 0$, then ${F(f)\over r(n)}\le A(f)\le r(n)F(f)$; and
%\item
 if
$F(f)<0$, then
 $r(n)F(f)\le A(f)\le {F(f)\over r(n)}$,
%\end{itemize}
 where $n$ is the number of variables in $f$.

Assume that functions $r(n)$ and $s(n)$  are from $N$ to $N^+$. For
a functor $F(.)$, an algorithm $A(.)$ gives an $(r(n),s(n))$-factor
approximation to $F(f)$ such that
%\begin{itemize}
%\item
if $F(f)\ge 0$, then ${F(f)\over r(n)}-s(n)\le A(f)\le
r(n)F(f)+s(n)$; and
%\item
 if
$F(f)< 0$, then
 $r(n)F(f)-s(n)\le A(f)\le {F(f)\over r(n)}+s(n)$,
%\end{itemize}
 where $n$ is the number of variables in $f$.

In this paper, we consider two kinds of functors. The first one is
the integration in the unit cube for a multivariate polynomial:
$\int_{[0,1]^d}f(x_1,\cdots,x_d)d_{x_1}\cdots d_{x_d}$. The second
is the derivative  ${\partial f^{(d)}(x_1,\cdots, x_d)\over \partial
x_1\cdots \partial x_d}$ at the origin point $(x_1,\cdots,
x_d)=(0,\cdots, 0)$.

For the complexity of multivariate integration, we consider the
functions with the format below:
$$f(x_1,\cdots, x_d)=p_1(x_1,\cdots, x_d)p_2(x_1,\cdots, x_d)\cdots
p_k(x_1,\cdots, x_d),$$ where each $p_i(x_1,\cdots,
x_d)=\sum_{j=1}^d q_j(x_j)$ with each single variable polynomials
$q_j(x_j)$ of constant degree. This kind multivariate polynomial is
called $\prod\sum \Si_c$ if the degree of each $q_j(x_j)$ is at most
$c$.

For the complexity of multivariate derivative, we consider the
functions with the format below:
$$f(x_1,\cdots, x_d)=p_1(x_1,\cdots, x_d)p_2(x_1,\cdots, x_d)\cdots
p_k(x_1,\cdots, x_d),$$ where each $p_i(x_1,\cdots, x_d)$ is of a
constant degree. The polynomial $f(x_1,\cdots, x_d)$ is called a
{\it $\prod\sum\prod_k$ polynomial} if the degree of each
$p_i(x_1,\cdots, x_d)$ is at most $k$.

An algorithm is {\it subexponential time} if it runs in
$2^{n^{o(1)}}$ time for all inputs of length $n$. Define $\subE$ to
be the class of languages that have subexponential time algorithms.

\section{Overview of Our Methods}\label{overview-sec}
In this section, we show the brief idea to derive the main result of
this paper (Theorem~\ref{main-theorem}). 3SAT is an NP-complete
problem proved by Cook~\cite{Cook-NP-complete}.
%We show that 3SAT problem can be converted into an integration of $n$ dimensional
%space with a polynomial. The existence of satisfiable assignments if
%and only the converted integration has a nonzero integration.
We show that approximating the integration of a $\prod\sum\Si_2$
polynomial is NP-hard by a reduction from 3SAT problem to it. It is
still NP-hard to decide a conjunctive normal form that each variable
appears at most three times with at most one negative time. We
assume that each variable has its negation appears at most one time
(Otherwise, we replace it by its negation).

We show (see Lemma~\ref{orthogonal2-lemma} and equations
(\ref{long-start}) to (\ref{long-end}) at its proof) that there
exist integer coefficients polynomial functions
$g_1(x)=ax^3+bx^2+cx+d$, $g_2(x)=ux+v$, and $f(x)=2x$ satisfy that
$\int_0^1 g_1(x)d_x=1$, $\int_0^1 g_2(x)d_x=1$, $\int_0^1f(x)d_x=1$,
$\int_0^1 g_1(x)g_2(x)d_x=4$,
 $\int_0^1 g_1(x)f(x)d_x=0$, $\int_0^1 g_2(x)f(x)d_x=0$, and $\int_0^1 g_1(x)g_2(x)f(x)d_x=0$.

{\bf Example 1.} Consider the logical formula
$F=(x_1+x_2)(x_1+\overline{x}_2)(\overline{x}_1+x_2)$, which has the
sum of product expansion
$x_1x_1\overline{x}_1+x_1x_1x_2+x_1\overline{x}_2\overline{x}_1+x_1\overline{x}_2x_2+x_2x_1\overline{x}_1+x_2x_1x_2+x_2\overline{x}_2\overline{x}_1+x_2\overline{x}_2x_2$.
The term $x_1x_1x_2$ can bring a truth assignment $x_1=true$ and
$x_2=true$ to make $F$ true. As each variable appears at most $3$
times with at most one negative appearance, the first positive $x_i$
is replaced by $g_1(y_i)$, the second positive $x_i$ is replaced by
$g_2(y_i)$, and the negative $\overline{x}_i$ is replaced by
$f(y_i)$. It is converted into the polynomial
\begin{eqnarray*}
p(y_1,y_2)=(g_1(y_1)+g_1(y_2))(g_2(y_1)+f(y_2))(f(y_1)+g_2(y_2)).
\end{eqnarray*}
The polynomial $p(y_1, y_2)$ has the sum of product expansion
\begin{eqnarray*}
&&g_1(y_1)g_2(y_1)f(y_1)+g_1(y_1)g_2(y_1)g_2(y_2)+g_1(y_1)f(y_2)f(y_1)+g_1(y_1)f(y_2)g_2(y_2)+\\
&&g_1(y_2)g_2(y_1)f(y_1)+g_1(y_2)g_2(y_1)g_2(y_2)+g_1(y_2)f(y_2)f(y_1)+g_1(y_2)f(y_2)g_2(y_2).
\end{eqnarray*}
%The polynomials $g_1(.), g_2(.)$, and $f(.)$ are the same as those
%in Lemma~\ref{orthogonal2-lemma}.

Consider the integration $\int_{[0,1]^2}p(y_1,y_2)d_{y_1}d_{y_2}$.
The integration can be distributed into those product terms.
$\int_{[0,1]^2}g_1(y_1)g_2(y_1)g_2(y_2)d_{y_1}d_{y_2}$ is one of
them. We have
\begin{eqnarray*}
\int_{[0,1]^2}g_1(y_1)g_2(y_1)g_2(y_2)d_{y_1}d_{y_2}&=&(\int_{[0,1]}g_1(y_1)g_2(y_1)d_{y_1})(\int_{[0,1]}g_2(y_2)d_{y_2})=4\cdot
1=4.
\end{eqnarray*}
 The integrations for other terms are all non-negative integers.
 Thus, $\int_{[0,1]^2}p(y_1,y_2)d_{y_1}d_{y_2}$ is a positive
 integer  due to the
satisfiability of $F$.

{\bf Example 2.} Consider the logical formula
$G=(x_1+x_2)\overline{x}_1\overline{x}_2$, which has the sum of
product expansion
$x_1\overline{x}_1\overline{x}_2+x_1\overline{x}_2\overline{x}_2$.
Neither $x_1\overline{x}_1\overline{x}_2$ nor
$x_1\overline{x}_2\overline{x}_2$ can be satisfied. As each variable
appears at most $3$ times with at most one negative appearance, the
first positive $x_i$ is replaced by $g_1(y_i)$, the second positive
$x_i$ is replaced by $g_2(y_i)$, and the negation case
$\overline{x}_i$ is replaced by $f(y_i)$. It is converted into the
polynomial $q(y_1,y_2)=(g_1(y_1)+g_1(y_2))f(y_1)f(y_2)$. The
polynomial $q(y_1, y_2)$ has the sum of product expansion
$g_1(y_1)f(y_1)f(y_2)+g_1(y_2)f(y_1)f(y_2)$.
%The polynomials $g_1(.), g_2(.)$, and $f(.)$ are the same as those in Lemma~\ref{orthogonal2-lemma}.

Consider the integration $\int_{[0,1]^2}q(y_1,y_2)d_{y_1}d_{y_2}$,
which is identical to
$\int_{[0,1]^2}g_1(y_1)f(y_1)f(y_2)d_{y_1}d_{y_2}+\int_{[0,1]^2}g_1(y_2)f(y_1)f(y_2)d_{y_1}d_{y_2}$.
 We have
\begin{eqnarray*}
\int_{[0,1]^2}g_1(y_1)f(y_1)f(y_2)d_{y_1}d_{y_2}&=&(\int_{[0,1]}g_1(y_1)f(y_1)d_{y_1})(\int_{[0,1]}f(y_2)d_{y_2})=0\cdot
1=0.
\end{eqnarray*}
We also have
 \begin{eqnarray*}
\int_{[0,1]^2}g_1(y_2)f(y_1)f(y_2)d_{y_1}d_{y_2}&=&(\int_{[0,1]}f(y_1)d_{y_1})(\int_{[0,1]}g_1(y_2)f(y_2)d_{y_2})
=1\cdot 0=0.
\end{eqnarray*}

Therefore, $\int_{[0,1]^2}q(y_1,y_2)d_{y_1}d_{y_2}=0$ due to the
unsatisfiability of $G$. Therefore, for any factor $a(n)>0$, a
polynomial time factor $a(n)$-approximation to the integration of a
$\prod\sum\Si_2$ polynomial implies a polynomial time decision for
the satisfiability of the corresponding boolean formula.

\section{Intractability of High Dimensional
Integration}\label{inapp-integration-sec}

In this section, we show that the integration in high dimensional
cube $[0,1]^d$ does not have any factor approximation. We will
reduce an existing NP-complete problem to the integration problem.
Our main technical contribution is in converting a logical formula
into a polynomial. We often use a basic property of integration,
which can be found in some standard text books of calculus (for
example~\cite{Trench78}). Assume function
$f(x_1,\cdots,x_d)=f_1(x_{i_1},\cdots, x_{i_{d_1}})f_2(x_{j_1},
\cdots, x_{j_{d_2}})$, where $\{x_1,\cdots,x_d\}$ is the disjoint
union of $\{x_{i_1},\cdots, x_{i_{d_1}}\}$ and $\{x_{j_1}, \cdots,
x_{j_{d_2}}\}$. Then we have
\begin{eqnarray}
&&\int_{[0,1]^d}f(x_1,\cdots,x_d)d_{x_1}\cdots
d_{x_d}\\
&=&\left(\int_{[0,1]^{d_1}}f_1(x_{i_1},\cdots,x_{i_{d_1}})d_{x_{i_1}}\cdots
d_{x_{i_{d_1}}}\right)\cdot
\left(\int_{[0,1]^{d_2}}f(x_{j_1},\cdots,x_{j_{d_2}})d_{x_{j_1}}\cdots
d_{x_{j_{d_2}}}\right).
\end{eqnarray}

In order to make the conversion from logical operation to algebraic
operation, we represent conjunctive normal form with the following
format. For example, the formula
$(x_1+x_2)(x_1+\overline{x}_2)(\overline{x}_1+x_2)$ is a conjunctive
normal form with two boolean variables $x_1$ and $x_2$, where $+$
represents the logical $\bigvee$, and $.$ represent the logical
$\bigwedge$.

\begin{definition}\scrod
\begin{itemize}
\item
A $3$SAT instance is a conjunctive form $C_1\cdot C_2\cdots C_m$
such each $C_i$ is a disjunction of at most three literals.
\item
$3$SAT is the language of those $3$SAT instances that have
satisfiable assignments.
\item
A $(3,3)$-SAT instance is an instance $G$ for 3SAT such that for
each variable $x$, the total number of times of $x$ and
$\overline{x}$ in $G$ is at most $3$, and the total number of times
of $\overline{x}$ in $G$ is at most $1$.
\item
$(3,3)$-SAT is the language of those $(3,3)$-SAT instances that have
satisfiable assignments.
\end{itemize}
\end{definition}

For examples,
$(x_1+x_2+x_3)(x_1+\overline{x}_2)(\overline{x}_1+x_2)$ is both 3SAT
 and $(3,3)$-SAT instance, and also belongs to both 3SAT and $(3,3)$-SAT. On the
other hand,
$(x_1+x_2+x_3)(\overline{x}_1+\overline{x}_2)(\overline{x}_1+x_2)$
is not a $(3,3)$-SAT instance since $\overline{x}_1$ appears twice
in the formula. The following lemma is similar to a result derived
by Tovey \cite{Tovey84}.

%For the completeness, we give the proof here.

\begin{lemma}\label{3-sat-lemma} There is a polynomial time
reduction $f(.)$ from 3SAT to  $(3,3)$-SAT.
% such that for each instance $F$ for 3SAT.
%each variable in $f(F)$ appears at most once with negation.
%if $\prod\sum\prod$-formula $f_1\cdot f_2\cdots f_m$ that each variable appears at most three times.
\end{lemma}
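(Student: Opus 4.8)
The plan is to reduce from 3SAT to $(3,3)$-SAT by taking a generic 3SAT instance and, for each variable that violates the occurrence constraints, replacing its occurrences with fresh variables tied together by auxiliary clauses that force them to take a common value. First I would observe that there are two conditions to enforce for each variable $x$: the total number of occurrences of $x$ and $\overline{x}$ is at most three, and the number of occurrences of $\overline{x}$ is at most one. The standard gadget (Tovey's construction) handles the first condition: if a variable $x$ occurs $t>3$ times, introduce $t$ fresh copies $x^{(1)},\dots,x^{(t)}$, use the $i$-th copy for the $i$-th occurrence of $x$, and add a cycle of implications $x^{(1)}\Rightarrow x^{(2)}\Rightarrow\cdots\Rightarrow x^{(t)}\Rightarrow x^{(1)}$, each implication written as a two-literal clause $(\overline{x}^{(i)}+x^{(i+1)})$. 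This forces all copies to be logically equal in any satisfying assignment, so the rewritten formula is satisfiable if and only if the original is.

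Next I would verify the occurrence bookkeeping, which is the crux of proving the gadget actually lands inside $(3,3)$-SAT. Each fresh variable $x^{(i)}$ appears once in a clause inherited from the original formula (either positively or negatively), once positively in one implication clause, and once negatively in the adjacent implication clause. I would carry out the count carefully so that every copy ends up with total degree exactly three and with its negation appearing at most once; the delicate point is that an original occurrence of $\overline{x}$ contributes a negative literal, and the implication cycle already spends the one allowed negative appearance on $\overline{x}^{(i)}$. The hard part will be routing the literals so that no copy accrues two negative occurrences: I expect to resolve this by orienting the implication cycle appropriately and, where a copy is used negatively in the original clause, replacing that copy's role so its single allowed negation is consumed by the original occurrence rather than by the cycle, possibly by substituting $x^{(i)}$ for its complement in the gadget. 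This rerouting is the main obstacle and requires a case analysis on whether the original occurrence is positive or negative.

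Finally I would assemble the full reduction: apply the gadget simultaneously to every over-occurring variable, note that each gadget is local and introduces only linearly many new variables and clauses, so the whole transformation $f(\cdot)$ runs in polynomial time. Correctness follows because the implication cycles make all copies equal, giving a bijection between satisfying assignments of the original instance and those of the output instance that respect the forced equalities. I would close by confirming that the output is a legitimate $(3,3)$-SAT instance — every clause has at most three literals (the two-literal implication clauses trivially qualify), every variable occurs at most three times, and every negation occurs at most once — which establishes that $f$ maps $3$SAT into $(3,3)$-SAT and preserves membership, completing the reduction.
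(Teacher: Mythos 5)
Your proposal is correct and follows essentially the same route as the paper: the Tovey-style copy-and-cycle gadget with two-literal implication clauses, followed by a polarity flip (substituting a fresh complemented variable) for any copy that would otherwise accumulate two negative occurrences, which is exactly the paper's final $y_x$ substitution step. The only cosmetic difference is that the paper keeps the original variable $x_i$ in the implication cycle while you cycle only through the copies; both variants work.
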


\begin{proof}
Let $F$ be an instance for 3SAT.
% We can even control the occurrence of each variable $x_i$ so that both $x_i$ appears at most three times.
Let's focus on one variable $x_i$ that appears $m$ times in $F$.
Introduce a series of variables $y_{i,1},\cdots, y_{i,m}$ for $x_i$.
Convert $F$ to $F'$ by changing the $j$-th occurrence of $x_i$ in
$F$ to $y_{i,j}$ for $j=1,\cdots, m$.
%Build up the equal connections
Define
\begin{eqnarray*}
G_{x_i}&=&(x_i\rightarrow y_{i,1})\cdot (y_{i,1}\rightarrow y_{i,2})
(y_{i,2}\rightarrow y_{i,3})\cdot (y_{i,3}\rightarrow y_{i,4})
%&&(y_{i,4}\rightarrow y_{i,5}),\\
%&&(y_{i,5}\rightarrow y_{i,6}),\\
%&&(y_{i,6}\rightarrow y_{i,7}),\\
\cdots (y_{i,m-1}\rightarrow y_{i,m})\cdot (y_{i,m}\rightarrow
x_i)\\
&=&(\overline{x}_i+ y_{i,1})\cdot (\overline{y}_{i,1}+y_{i,2})\cdot
(\overline{y}_{i,2}+y_{i,3})\cdot (\overline{y}_{i,3}+y_{i,4})
\cdots (\overline{y}_{i,m-1}+y_{i,m})\cdot(\overline{y}_{i,m}+x_i).
\end{eqnarray*}

Each logical formula $(x\rightarrow y)$ is equivalent to
$(\overline{x}+y)$. If $G_{x_i}$ is true, then  $x_i,
y_{i,1},\cdots, y_{i,m}$ are equivalent.

Convert $F'$ into $F''$ such that $F''=F'G_{x_1}\cdots G_{x_k}$,
where $x_1,\cdots, x_k$ are all variables in $F$.

%Furthermore, each variable appears only twice. We
%replace each $x_i$ in the original formula by one $y_{i,j}$. Then
%new formula $F'$ each variable appears at most three times.
%Furthermore, $F$ is satisfiable if and only if $F'$ is satisfiable.

%If a variable $x_i$ in formula $F'$ and only positive literal $x_i$
%or only negative literal $\bar{x_i}$, then replace with new variable $y_i$.

For each variable $x$ in $F''$ with more than one $\overline{x}$,
create a new variable $y_x$, replace each positive $x$ of $F$ by
$\overline{y_x}$, and each negative $\overline{x}$ by $y_x$. Thus,
$F''$ becomes $F'''$. It is easy to see that $F\in$ 3SAT iff $F''$
is satisfiable iff $F'''\in $(3,3)-SAT.

\end{proof}

\subsection{Integration of $\prod\sum\Si_2$ Polynomial%: Collapsing NP to P
}

Lemma~\ref{orthogonal2-lemma} is our main technical lemma. It is
used to convert a $(3,3)$-SAT instance into a $\prod\sum\Si_2$
polynomial.

\begin{lemma}\label{orthogonal2-lemma}
There exist integers $b,c,d,u$, and $v$ such that the functions
$g_1(x)=bx^2+cx+d$, $g_2(x)=ux+v$, and $f(x)=2x$ satisfy that
\begin{enumerate}[1.]
\item
$\int_0^1 g_1(x)d_x$, $\int_0^1 g_2(x)d_x$, $\int_0^1f(x)d_x$, and
$\int_0^1 g_1(x)g_2(x)d_x$ are all positive integers, and
\item
  $\int_0^1 g_1(x)f(x)d_x$, $\int_0^1 g_2(x)f(x)d_x$, and $\int_0^1 g_1(x)g_2(x)f(x)d_x$ are all equal to $0$.
\end{enumerate}
\end{lemma}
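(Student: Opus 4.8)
The plan is to exploit the fact that every integral appearing in the statement is, after expansion, a finite linear combination of the monomial moments $\int_0^1 x^k\,d_x = 1/(k+1)$. Consequently each of the seven requirements becomes a linear condition on the unknown integer coefficients: the three conditions in part~2 are homogeneous linear equations (the integrals must vanish), while the four conditions in part~1 ask that certain linear forms in the coefficients be positive integers. The whole lemma therefore reduces to solving a small linear system over the rationals and then clearing denominators so that the solution is integral and the four ``free'' integrals land on positive integers.

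Since $f(x)=2x$ is already fixed with $\int_0^1 f\,d_x = 1$, the first step is to pin down $g_2(x)=ux+v$. Only two conditions involve $g_2$ alone: the orthogonality $\int_0^1 g_2(x)f(x)\,d_x = 2u/3+v = 0$ and the positivity $\int_0^1 g_2\,d_x = u/2+v \in N^+$. The vanishing condition forces $v=-2u/3$, so $u$ must be a multiple of $3$; substituting gives $\int_0^1 g_2\,d_x = -u/6$, which is a positive integer exactly when $u$ is a negative multiple of $6$. Taking $u=-6$ yields $g_2(x)=-6x+4$, with $\int_0^1 g_2\,d_x = 1$.

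Next I would determine $g_1(x)=bx^2+cx+d$ from the two remaining vanishing conditions. Expanding and integrating gives $\int_0^1 g_1(x)f(x)\,d_x = b/2+2c/3+d$ and $\int_0^1 g_1(x)g_2(x)f(x)\,d_x = -2b/5-c/3$. Setting both equal to $0$ is a pair of linear equations in $b,c,d$ whose solution is the one-parameter family $c=-6b/5$, $d=3b/10$. Substituting this back into the two remaining integrals gives $\int_0^1 g_1\,d_x = b/30$ and $\int_0^1 g_1(x)g_2(x)\,d_x = -b/6+d$, so choosing $b=30$ makes the coefficients integral ($c=-36$, $d=9$) and produces $g_1(x)=30x^2-36x+9$ with $\int_0^1 g_1\,d_x = 1$ and $\int_0^1 g_1 g_2\,d_x = 4$.

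The only real obstacle is the arithmetic bookkeeping: the moments $1/2$, $1/3$, $1/5$ entering the two orthogonality equations for $g_1$ force the rational solution to carry a denominator of $30$, so $b$ must be a multiple of $30$ both to keep $c$ and $d$ integral and to make $\int_0^1 g_1\,d_x = b/30$ an integer. One must then check that the smallest admissible choice keeps all four part~1 integrals strictly positive; with $b=30$ this is immediate, and a final direct substitution confirms all seven (in)equalities, completing the construction with $g_1(x)=30x^2-36x+9$, $g_2(x)=-6x+4$, and $f(x)=2x$.
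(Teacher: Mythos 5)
Your proposal is correct and follows essentially the same route as the paper's proof: fix $u=-6$, $v=4$ from the two conditions involving only $g_2$, reduce the remaining vanishing conditions to a homogeneous linear system in $b,c,d$, and clear denominators to land on $g_1(x)=30x^2-36x+9$, $g_2(x)=-6x+4$, exactly the functions the paper constructs. All of your intermediate moment computations ($\int_0^1 g_1 f = b/2+2c/3+d$, $\int_0^1 g_1g_2f = -2b/5-c/3$, $\int_0^1 g_1 = b/30$ after substitution) agree with the paper's equations.
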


\begin{proof}
We give the details how to derive the functions $g_1(x)$ and
$g_2(x)$ to satisfy the conditions of the lemma. In order to avoid
solving nonlinear equations, we will fix the two variables $u$ and
$v$ in the early phase of the construction.

\begin{eqnarray}
\int_{[0,1]}f(x)d_x=\int_{[0,1]}2xd_x=x^2|_0^1=1.\label{f-final-eqn}
\end{eqnarray}

\begin{eqnarray}
\int_0^1 g_1(x)d_x&=&\int_0^1 (bx^2+cx+d)d_x\label{g1-start}\\
&=& ({bx^3\over 3}+{cx^2\over 2}+dx)|_0^1\\
&=& {b\over 3}+{c\over 2}+d\\
&=& {1\over 6}(2b+3c+6d).\label{g1-end}
\end{eqnarray}

\begin{eqnarray}
\int_0^1 g_2(x)d_x&=&\int_0^1 (ux+v)d_x\label{g2-start}\\
&=& ({ux^2\over 2}+vx)|_0^1\\
 &=&{1\over 2}(u+2v).\label{g2-end}
\end{eqnarray}

\begin{eqnarray}
\int_0^1 g_2(x)f(x)d_x&=&\int_0^1 (ux+v)2xd_x\label{g2-f-start}\\
&=&2\int_0^1 (ux^2+vx)d_x\\
&=&2({ux^3\over 3}+{vx^2\over 2})|_0^1\\
&=&2({u\over 3}+{v\over 2})\\
&=&{1\over 3}(2u+3v).\label{g2-f-end}
\end{eqnarray}

We let

\begin{eqnarray}
u&=&-6\\
v&=&4.
\end{eqnarray}

Therefore, we have got
\begin{eqnarray}
\int_{[0,1]}g_2(x)d_x&=&1\ \ \ \mbox{(by\ equations\
(\ref{g2-start})\ to\ (\ref{g2-end})),\ and}\label{g2-final-eqn}\\
 \int_{[0,1]}g_2(x)f(x)d_x&=&0\ \ \ \mbox{(by\ equations\
(\ref{g2-f-start})\ to\ (\ref{g2-f-end}))}.\label{g2-f-final-eqn}
\end{eqnarray}

\begin{eqnarray}
\int_0^1 g_1(x)g_2(x)d_x&=&\int_0^1 (bx^2+cx+d)(ux+v)d_x\label{g1-g2-start}\\
&=&\int_0^1  (bx^2+cx+d)(-6x+4)d_x\\
&=&\int_0^1 (-6bx^3-6cx^2-6dx+4bx^2+4cx+4d)d_x\\
&=&\int_0^1 ((-6b)x^3+(4b-6c)x^2+(4c-6d)x+4d)d_x\\
&=&({(-6b)x^4\over 4}+{(4b-6c)x^3\over
3}+{(4c-6d)x^2\over 2}+4dx)|_0^1\\
&=&({(-6b)\over 4}+{(4b-6c)\over 3}+{(4c-6d)\over
2}+4d)\\
&=&{1\over 12}(3\cdot {(-6b)}+4\cdot
{(4b-6c)}+6\cdot {(4c-6d)}+48d)\\
&=&{1\over 12}((-18+16)b+(-24+24)c+(48-36)d)\\
&=&{1\over 12}((-2)b+12d)\\
 &=&{1\over
6}(-b+6d).\label{g1-g2-end}
\end{eqnarray}

\begin{eqnarray}
\int_0^1 g_1(x)f(x)d_x&=&\int_0^1 (bx^2+cx+d)2x d_x\label{g1-f-start}\\
&=&2\int_0^1 (bx^3+cx^2+dx) d_x\\
&=&2({bx^4\over 4}+{cx^3\over 3}+{dx^2\over 2})|_0^1\\
&=&2({b\over 4}+{c\over 3}+{d\over 2})\\
%&=&{1\over 30}(15b+20c+30d)\\
&=&{1\over 6}(3b+4c+6d).\label{g1-f-end}
\end{eqnarray}

\begin{eqnarray}
&&\int_0^1 g_1(x)g_2(x)f(x)d_x\label{g1-g2-f-start}\\
&=&\int_0^1 (bx^2+cx+d)(ux+v)2xd_x\\
&=&2\int_0^1  (bx^2+cx+d)(-6x+4)xd_x\\
&=&2\int_0^1 (-6bx^3-6cx^2-6dx+4bx^2+4cx+4d)xd_x\\
&=&2\int_0^1 ((-6b)x^3+(4b-6c)x^2+(4c-6d)x+4d)xd_x\\
&=&2({(-6b)x^5\over 5}+{(4b-6c)x^4\over
4}+{(4c-6d)x^3\over 3}+{4dx^2\over 2})|_0^1\\
&=&2({(-6b)\over 5}+{(4b-6c)\over 4}+{(4c-6d)\over
3}+{4d\over 2})\\
&=&{2\over 60}({12\cdot (-6b)}+15\cdot {(4b-6c)}+20\cdot
{(4c-6d)}+120d)\\
&=&{1\over 30}((-72+60)b+(-90+80)c+(-120+120)d)\\
&=&{1\over 30}(-12b-10c)\\
&=&{1\over 15}(-6b-5c)\label{g1-g2-f-end}
\end{eqnarray}

We need to satisfy the following conditions:
\begin{eqnarray}
6b+ 5c&=&0 \label{first}\\
3b+4c+6d&=&0\label{second}\\
-b+6d&=&6n_1\ \ \ \mbox{for\ some\ positive\ integer\ } n_1\label{third}\\
2b+3c+6d&=&6n_2\ \ \ \mbox{for\ some\ positive\ integer\ }
n_2\label{fourth}
 \end{eqnarray}
Equation~(\ref{first}) makes $\int_{[0,1]}g_1(x)g_2(x)f(x)d_x=0$
according to equations (\ref{g1-g2-f-start}) to (\ref{g1-g2-f-end}).
Equation (\ref{second}) makes $\int_{[0,1]}g_1(x)f(x)d_x=0$
according to equations (\ref{g1-f-start}) to (\ref{g1-f-end}).
Equation (\ref{third}) makes $\int_{[0,1]}g_1(x)g_2(x)d_x$ be a
positive integer according to equations (\ref{g1-g2-start}) to
(\ref{g1-g2-end}). Equation (\ref{fourth}) makes
$\int_{[0,1]}g_1(x)d_x$ be a positive integer according to equations
(\ref{g1-start}) to (\ref{g1-end}).

Let $x$ and $k$ be integer parameters to be fixed later. We have the
solutions below:
\begin{eqnarray}
b&=&5x\cdot 6^k,\\
c&=&-{6b\over 5}=-6x\cdot 6^k,\ \ \ \ \ \mbox{(by\ equation~(\ref{first}))}\ \ \ \mbox{and}\\
d&=&{1\over 6}(-3b-4c)={1\over 6}(-3\cdot (5x\cdot 6^k)-4(-6x\cdot 6^k))\\
&=&9x\cdot 6^{k-1}\ \ \ \ \ \mbox{(by\ equation~(\ref{second}))}.
\end{eqnarray}

We have the equations:
\begin{eqnarray}
-b+6d&=&-5x\cdot 6^k+6\cdot (9x\cdot 6^{k-1})=4x\cdot 6^k\ \ \ \mbox{and}\label{b-6b-eqn}\\
2b+3c+6d&=&2\cdot (5x\cdot 6^k)+3\cdot (-6x\cdot 6^k)+6\cdot
(9x\cdot 6^{k-1})\\
&=&(10x-18x+9x)6^k=x\cdot 6^k.\label{2b-3c-6d-eqn}
\end{eqnarray}

 \vskip 10pt

Let $x=1$ and $k=1$. We have $b=30, c=-36, $  and $d=9$. We also
have
\begin{eqnarray}
-b+6d&=&24\ \ \  \mbox{(by\ equation\ (\ref{b-6b-eqn}))}\label{g1-g2-final-eqn}\\
2b+3c+6d&=&6\ \ \  \mbox{(by\ equation\
(\ref{2b-3c-6d-eqn}))}.\label{g1-final-eqn}
\end{eqnarray}
 Thus, $g_1(x)=bx^2+cx+d=30x^2-36x+9$,
$g_2(x)=-6x+4$, and $f(x)=2x$. We have the following equations to
satisfy the conditions in the lemma.
\begin{eqnarray}
\int_0^1f(x)d_x&=&1,\ \ \  \mbox{(by\ equation\ (\ref{f-final-eqn})))}\label{long-start}\\
\int_0^1g_1(x)d_x&=&1,\ \ \  \mbox{(by\ equations\ (\ref{g1-start}) to (\ref{g1-end}),\ and\ (\ref{g1-final-eqn}))}\\
\int_0^1g_2(x)d_x&=&1,\ \ \  \mbox{(by\ equation\ (\ref{g2-final-eqn}))}\\
\int_0^1g_1(x)g_2(x)d_x&=&4,\ \ \  \mbox{(by\ equations (\ref{g1-g2-start})\ to\ (\ref{g1-g2-end}),\ and\ (\ref{g1-g2-final-eqn}))}\\
\int_0^1g_1(x)f(x)d_x&=&0,\ \ \  \mbox{(by\ equations\ (\ref{g1-f-start})\ to\ (\ref{g1-f-end}),\ and\ (\ref{second}))}\\
\int_0^1g_2(x)f(x)d_x&=&0,\ \ \  \mbox{(by\
equation(\ref{g2-f-final-eqn}))}, \ \ \ \ \mbox{
and}\\
 \int_0^1g_1(x)g_2(x)f(x)d_x&=&0. \ \ \  \mbox{(by\ equations\ (\ref{g1-g2-f-start}) to
 (\ref{g1-g2-f-end}),\ and\ the\ solutions\ for\ $b$\ and\ $c$))}\label{long-end}
\end{eqnarray}
\end{proof}

\begin{lemma}\label{main2-lemma} There is a  polynomial time
algorithm $h$ such that given a $(3,3)$-SAT instance $s(x_1,\cdots,
x_n)$, it produces a $\prod\sum\Si_2$ polynomial
$h(s(x_1,\cdots,x_n))=p(y_1,\cdots, y_n)$ to satisfy the following
two conditions:
\begin{enumerate}[1.]
\item
if $s(x_1,\cdots, x_n)$ is satisfiable, then
$\int_{[0,1]^{n}}p(y_1,\cdots, y_n)d_{y_1}\cdots d_{y_n}$ is a
positive integer; and
\item
if $s(x_1,\cdots, x_n)$ is not satisfiable, then
$\int_{[0,1]^{n}}p(y_1,\cdots, y_n)d_{y_1}\cdots d_{y_n}$ is zero.
\end{enumerate}
\end{lemma}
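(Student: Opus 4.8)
The plan is to implement the arithmetization of CNF previewed in Examples~1 and~2, using the three polynomials $g_1,g_2,f$ from Lemma~\ref{orthogonal2-lemma} as logical tokens whose pairwise integrals encode consistency of a truth assignment, and then to read the value of $\int_{[0,1]^n}p$ directly off the expansion of $p$ into products. The orthogonality relations already proved in Lemma~\ref{orthogonal2-lemma} do all the arithmetic; what remains is a careful bookkeeping argument tying selections of literals to satisfying assignments.

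First I would describe the map $h$. I may assume each variable of $s$ occurs at most twice positively and at most once negatively, which is exactly the form produced by the reduction of Lemma~\ref{3-sat-lemma}. I then replace the first positive occurrence of $x_i$ by $g_1(y_i)$, the second positive occurrence by $g_2(y_i)$, and the negative occurrence $\overline{x}_i$ by $f(y_i)$; each clause, being a disjunction $\ell_{j,1}\vee\ell_{j,2}\vee\ell_{j,3}$, becomes the sum $p_j$ of the corresponding tokens, and the whole conjunction becomes $p=\prod_{j}p_j$. Since $g_1$ has degree $2$ and $g_2,f$ have degree $1$, each $p_j$ is a sum of single-variable polynomials of degree at most two, so $p$ is a $\prod\sum\Si_2$ polynomial and $h$ runs in polynomial time. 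Because each token $g_1(y_i),g_2(y_i),f(y_i)$ corresponds to one fixed literal occurrence, each token lies in at most one clause-sum $p_j$; this is the fact I will lean on repeatedly.

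Next I would expand $p=\prod_j p_j$ into a sum of products, one product for each \emph{selection} $\sigma$ of a single token from each clause. By the product-of-integrals rule for functions on disjoint variable blocks recalled at the beginning of this section, $\int_{[0,1]^n}p=\sum_{\sigma}\prod_{i=1}^n I_i(\sigma)$, where $I_i(\sigma)=\int_0^1(\text{product of the }y_i\text{-tokens chosen by }\sigma)\,d y_i$ and an empty product integrates to $1$. Since $g_1(y_i),g_2(y_i),f(y_i)$ each lie in a unique clause, $\sigma$ can pick each of them at most once, so Lemma~\ref{orthogonal2-lemma} forces $I_i(\sigma)\in\{0,1,4\}$: it is $0$ exactly when $\sigma$ chooses $f(y_i)$ together with $g_1(y_i)$ or $g_2(y_i)$ (i.e. uses $x_i$ both positively and negatively), it is $4$ when it chooses both $g_1(y_i)$ and $g_2(y_i)$, and it is $1$ otherwise. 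Hence every selection contributes a nonnegative integer, and it contributes a positive integer precisely when $\sigma$ is sign-consistent, meaning no variable is used with both signs.

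Finally I would close the loop with satisfiability. If $s$ is satisfiable, I fix a satisfying assignment and pick one true literal from each clause; this selection is sign-consistent, so its contribution is a positive integer while every other contribution is nonnegative, making the total a positive integer. Conversely, from any sign-consistent selection I build a partial assignment setting each chosen literal to true—consistency guarantees no variable is forced both ways—and this assignment satisfies every clause; so if $s$ is unsatisfiable then no selection is sign-consistent, every $I_i(\sigma)$-product vanishes, and the integral is $0$. I expect the only delicate point to be the token-to-occurrence bookkeeping: verifying that each token lands in a unique clause, hence is selected at most once, is exactly what licenses the clean values $\{0,1,4\}$ of Lemma~\ref{orthogonal2-lemma}, and once that is pinned down the equivalence with satisfiability is immediate.
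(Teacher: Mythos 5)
Your proposal is correct and follows essentially the same route as the paper: replace the first and second positive occurrences of $x_i$ by $g_1(y_i)$ and $g_2(y_i)$ and the negative occurrence by $f(y_i)$, expand the product over clause-wise selections, and invoke the orthogonality relations of Lemma~\ref{orthogonal2-lemma} so that sign-inconsistent selections integrate to $0$ and sign-consistent ones to positive integers. Your write-up is in fact more explicit than the paper's (which mostly defers to the examples in Section~\ref{overview-sec}), and your remark that one may assume at most two positive and one negative occurrence per variable matches the form actually produced by the reduction of Lemma~\ref{3-sat-lemma}.
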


\begin{proof} We give two examples to show how a
logical formula is converted into a multivariate polynomial in
section~\ref{overview-sec}. Let polynomials $g_1(y), g_2(y)$, and
$f(y)$ be defined according to those in
Lemma~\ref{orthogonal2-lemma}.

For a $(3,3)$-SAT problem $s(x_1,\cdots, x_n)$, let $p(y_1,\cdots,
y_n)$ be defined a follows.
%Without loss of generality, assume the variable $x_i$ has two positive literals
%and one negative literal (otherwise, replace $x_i$ by its negations
%$\overline{x_i}$ and replace its negation $\overline{x_i}$ by $x_i$).
\begin{itemize}
\item
For the first positive literal $x_i$ in $s(x_1,\cdots, x_n)$,
replace it with $g_1(y_i)$.
\item For the second positive literal $x_i$
in $s(x_1,\cdots, x_n)$, replace it with $g_2(y_i)$.
\item
For the negative literal $\overline{x}_i $ in $s(x_1,\cdots, x_n)$,
replace it with $f(y_i)$.
\end{itemize}

The formula $s(x_1,\cdots, x_n)$ has a sum of product form. It is
satisfiable if and only if one term does not contain a positive and
negative literals for the same variable. If a term contains both
$x_i$ and $\overline x_i$, the corresponding term in the sum of
product for $p(.)$ contains both $g_j(y_i)$ and $f(y_i)$ for some
$j\in\{1,2\}$. This makes it zero after integration by
Lemma~\ref{orthogonal2-lemma}. Therefore, $s(x_1,\cdots, x_n)$ is
satisfiable if and only if $\int_{[0,1]^n}p(y_1,\cdots,
y_n)d_{y_1}\cdots d_{y_n}$ is not zero. Furthermore, it is
satisfiable, the integration is a positive integer by
Lemma~\ref{orthogonal2-lemma}. See the two examples in
section~\ref{overview-sec}. The computational time of $h$ is clearly
polynomial since we convert $s$ to $h(s)$ by replacing each literal
by a single variable function of degree at most $2$.
%We can check if $\int_{[0,1]^n}p(y_1,\cdots, y_n)d_{y_1}\cdots
%d_{y_n}$ is a zero polynomial by replacing each variable $z_{j,i}$
%with a random integer.
\end{proof}

\begin{theorem}\label{main-theorem} Let  $a(n)$ be an arbitrary function from $N$ to $N^+$.
Then there is no polynomial time $a(n)$-factor approximation for the
integration of a $\prod\sum\Si_2$ polynomial $p(x_1,\cdots, x_n)$ in
the region $[0,1]^n$ unless $\Pnew=\NP$.
\end{theorem}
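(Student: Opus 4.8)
The plan is to argue by contraposition: I assume that some polynomial time algorithm $A$ computes an $a(n)$-factor approximation to the integration functor on $\prod\sum\Si_2$ polynomials, and from it I construct a polynomial time decision procedure for 3SAT, which forces $\Pnew=\NP$.

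First I would chain together the reductions already established. Given a 3SAT instance, Lemma~\ref{3-sat-lemma} converts it in polynomial time into an equivalent $(3,3)$-SAT instance $s(x_1,\cdots,x_n)$, and then Lemma~\ref{main2-lemma} converts $s$ in polynomial time into a $\prod\sum\Si_2$ polynomial $p=h(s)$ whose integral $F(p)=\int_{[0,1]^n}p(y_1,\cdots,y_n)\,d_{y_1}\cdots d_{y_n}$ is a positive integer when $s$ is satisfiable and is exactly $0$ when $s$ is not.

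Next I would invoke the definition of $a(n)$-factor approximation on this particular $p$. Since $F(p)\ge 0$ always holds in this construction, the defining bound reads ${F(p)\over a(n)}\le A(p)\le a(n)F(p)$. The key consequence is that $A$ can never straddle the value $0$: if $F(p)=0$ the two-sided bound pins $A(p)=0$, whereas if $F(p)$ is a positive integer then $F(p)\ge 1$, so $A(p)\ge {1\over a(n)}>0$. Hence $A(p)>0$ if and only if $s$ is satisfiable, and testing whether $A(p)>0$ decides $(3,3)$-SAT, and therefore 3SAT, in polynomial time, giving $\Pnew=\NP$.

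The argument carries essentially no computational content beyond the preceding lemmas; its only delicate point is that the quantity an approximation must preserve is the gap between the exact value $0$ and a genuine positive integer. This is precisely what Lemma~\ref{orthogonal2-lemma} supplies through the three vanishing integrals in its second group: because an unsatisfiable instance produces an exact $0$ rather than a merely small positive quantity, no multiplicative distortion factor $a(n)$, however large, can blur the two cases. Thus the main (and only) obstacle, namely maintaining a multiplicatively robust separation between yes-instances and no-instances, has already been resolved by the orthogonality construction, and the theorem follows as an immediate corollary of Lemma~\ref{3-sat-lemma} and Lemma~\ref{main2-lemma} together with the multiplicative form of the approximation definition.
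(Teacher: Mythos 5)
Your proposal is correct and follows essentially the same route as the paper's own proof: reduce 3SAT to $(3,3)$-SAT via Lemma~\ref{3-sat-lemma}, apply Lemma~\ref{main2-lemma} to get a $\prod\sum\Si_2$ polynomial whose integral is a positive integer or exactly zero, and observe that a multiplicative $a(n)$-factor approximation must output $0$ in the unsatisfiable case and a strictly positive value in the satisfiable case. Your closing remark that the exact-zero versus positive-integer gap is what makes the argument immune to any approximation factor is precisely the point the paper's construction is designed to secure.
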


\begin{proof} Assume that $A(.)$ is a polynomial time  $a(n)$-factor
approximation for the integration $\int_{[0,1]^n}p(y_1,\cdots,
y_n)d_{y_1}\cdots d_{y_n}$ with $\prod\sum\Si_2$ polynomial
$p(y_1,\cdots, y_n)$. For a $(3,3)$-SAT instance
$s(x_1,\cdots,x_n)$, let $p(y_1,\cdots, y_n)=h(s(x_1,\cdots,x_n))$
according to Lemma~\ref{main2-lemma}. By Lemma~\ref{main2-lemma}, a
$(3,3)$-SAT instance $s(x_1,\cdots,x_n)$ is satisfiable  if and only
if the integration $J=\int_{[0,1]^n}p(y_1,\cdots,y_n)d_{y_1}\cdots
d_{y_n}$ is not zero. Assume that $s(x_1,\cdots,x_n)$ is not
satisfiable, then we have $A(J)\in [J/a(n),J\cdot a(n)]=[0,0]$,
which implies $A(J)=0$. Assume that $s(x_1,\cdots,x_n)$ is
satisfiable, then we have $A(J)\in [J/a(n),J\cdot a(n)]\subseteq
(0,+\infty)$, which implies $A(J)>0$. Thus, $s(x_1,\cdots,x_n)$ is
satisfiable if and only if $A(J)>0$.

Therefore, there is a polynomial time algorithm for solving
$(3,3)$-SAT, which is NP-complete by Lemma~\ref{3-sat-lemma}. So,
$\Pnew=\NP$.
\end{proof}

\begin{theorem} Let  $a(n)$ be an arbitrary function from $N$ to $N^+$.
Then there is no subexponential  time $a(n)$-factor approximation
for the integration of a $\prod\sum\Si_2$ polynomial $p(x_1,\cdots,
x_n)$ in the region $[0,1]^n$ unless $\NP\subseteq\subE$.
\end{theorem}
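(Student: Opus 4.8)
The plan is to reuse the entire reduction machinery already established for Theorem~\ref{main-theorem} and merely re-examine the running-time bookkeeping, replacing the conclusion $\Pnew=\NP$ by $\NP\subseteq\subE$. Nothing about the \emph{correctness} of the reduction changes: Lemma~\ref{main2-lemma} still guarantees that a $(3,3)$-SAT instance $s(x_1,\cdots,x_n)$ is satisfiable if and only if the integral $J=\int_{[0,1]^n}p(y_1,\cdots,y_n)d_{y_1}\cdots d_{y_n}$ of the associated $\prod\sum\Si_2$ polynomial $p=h(s)$ is a positive integer (and is $0$ otherwise). Crucially, $h$ preserves the variable count, producing an $n$-variable polynomial from an $n$-variable instance, and runs in polynomial time, so the encoding length of $p$ is polynomial in the encoding length of $s$.

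First I would assume, toward a contradiction, that $A(.)$ is a subexponential-time $a(n)$-factor approximation for the integration of $\prod\sum\Si_2$ polynomials. As in the proof of Theorem~\ref{main-theorem}, feeding $J$ to $A$ distinguishes the two cases: if $s$ is unsatisfiable then $A(J)\in[J/a(n),J\cdot a(n)]=[0,0]$, forcing $A(J)=0$, while if $s$ is satisfiable then $A(J)\in[J/a(n),J\cdot a(n)]\subseteq(0,+\infty)$, forcing $A(J)>0$. Hence $s$ is satisfiable if and only if $A(J)>0$, giving a decision procedure for $(3,3)$-SAT. Composing with the polynomial-time reduction of Lemma~\ref{3-sat-lemma} from 3SAT to $(3,3)$-SAT yields a decision procedure for 3SAT.

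The step I expect to require the most care is verifying that the composed procedure is still subexponential. Given a 3SAT instance of length $m$, the two polynomial-time reductions produce a polynomial $p$ whose encoding length is $N\le m^c$ for some constant $c$, and then $A$ runs in time $2^{N^{o(1)}}\le 2^{(m^c)^{o(1)}}=2^{m^{o(1)}}$, where the last equality uses that a constant factor $c$ is absorbed into the $o(1)$ exponent. Thus the whole procedure runs in $2^{m^{o(1)}}$ time, placing 3SAT in $\subE$. Because 3SAT is NP-complete and $\subE$ is closed under polynomial-time many-one reductions (the same exponent bookkeeping shows $L\in\subE$ whenever $L$ reduces in polynomial time to a language in $\subE$), it follows that $\NP\subseteq\subE$, which completes the argument.
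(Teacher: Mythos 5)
Your proposal is correct and follows essentially the same route as the paper: the identical reduction via Lemma~\ref{main2-lemma} and the same case analysis showing $s$ is satisfiable iff $A(J)>0$, concluding $\NP\subseteq\subE$ from the NP-completeness of $(3,3)$-SAT. The only difference is that you make explicit the running-time bookkeeping (that composing polynomial-time reductions with a $2^{N^{o(1)}}$-time algorithm stays subexponential, and that $\subE$ is closed under polynomial-time many-one reductions), which the paper leaves implicit.
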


\begin{proof}Assume that $A(.)$ is a subexponential time  $a(n)$-factor
approximation for the integration $\int_{[0,1]^n}p(y_1,\cdots,
y_n)d_{y_1}\cdots d_{y_n}$ with $\prod\sum\Si_2$ polynomial
$p(y_1,\cdots, y_n)$.

For a $(3,3)$-SAT instance $s(x_1,\cdots,x_n)$, let $p(y_1,\cdots,
y_n)=h(s(x_1,\cdots,x_n))$ according to Lemma~\ref{main2-lemma}. By
Lemma~\ref{main2-lemma}, a $(3,3)$-SAT instance $s(x_1,\cdots,x_n)$
is satisfiable  if and only if the integration
$J=\int_{[0,1]^n}p(y_1,\cdots,y_n)d_{y_1}\cdots d_{y_n}$ is not
zero. Assume that $s(x_1,\cdots,x_n)$ is not satisfiable, then we
have $A(J)\in [J/a(n),J\cdot a(n)]=[0,0]$, which implies $A(J)=0$.
Assume that $s(x_1,\cdots,x_n)$ is satisfiable, then we have
$A(J)\in [J/a(n),J\cdot a(n)]\subseteq (0,+\infty)$, which implies
$A(J)>0$. Thus, $s(x_1,\cdots,x_n)$ is satisfiable if and only if
$A(J)>0$.

Therefore, there is a subexponential time algorithm for solving
$(3,3)$-SAT, which is NP-complete by Lemma~\ref{3-sat-lemma}. Thus,
$\NP\subseteq\subE$.
\end{proof}

\begin{lemma}\label{main3-lemma}
% Assume that $c$ is a positive constant.
Assume that $a(1^n)$ is a polynomial time computable function from
$N$ to $N^+$ with $a(1^n)>0$ for $n$. There is a polynomial time
algorithm such that given a $(3,3)$-SAT instance $s(x_1,\cdots,
x_n)$, it generates a $\prod\sum\Si_2$ polynomial $p(y_1,\cdots,
y_n)$ such that if $s(x_1,\cdots, x_n)$ is satisfiable, then
$\int_{[0,1]^{n}}p(y_1,\cdots, y_n)d_{y_1\cdots y_n}$ is a positive
integer at least $3a(1^n)^2$; and if $s(x_1,\cdots, x_n)$ is not
satisfiable, $\int_{[0,1]^{n}}p(y_1,\cdots, y_n)d_{y_1\cdots y_n}$
is zero.
\end{lemma}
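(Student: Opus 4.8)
The plan is to build directly on Lemma~\ref{main2-lemma} and amplify its output by a single multiplicative scaling factor. First I would run the algorithm $h$ from Lemma~\ref{main2-lemma} on the given $(3,3)$-SAT instance $s(x_1,\cdots,x_n)$ to obtain a $\prod\sum\Si_2$ polynomial $q(y_1,\cdots,y_n)=q_1(y_1,\cdots,y_n)\cdots q_k(y_1,\cdots,y_n)$ whose integral over $[0,1]^n$ is a positive integer when $s$ is satisfiable and $0$ when $s$ is unsatisfiable. Next I would set $M=3a(1^n)^2$. Because $a$ is polynomial time computable on the unary input $1^n$ of length $n$, the value $a(1^n)$ has bit-length polynomial in $n$, so $M$ is a positive integer that can be written down and manipulated in polynomial time.

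The key step is the scaling. Since $q_1(y_1,\cdots,y_n)=\sum_{j=1}^n r_j(y_j)$ with each $r_j$ a single-variable polynomial of degree at most $2$ and integer coefficients (as produced from $g_1,g_2,f$ in Lemma~\ref{orthogonal2-lemma}), the product $M q_1=\sum_{j=1}^n (M r_j)(y_j)$ is again a sum of single-variable polynomials of degree at most $2$ with integer coefficients. Hence $p:=(M q_1)\,q_2\cdots q_k$ is still a $\prod\sum\Si_2$ polynomial, and by the product rule for integration recalled in Section~\ref{inapp-integration-sec} together with linearity, $\int_{[0,1]^n}p\,d_{y_1}\cdots d_{y_n}=M\int_{[0,1]^n}q\,d_{y_1}\cdots d_{y_n}$.

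Combining the two cases of Lemma~\ref{main2-lemma} then finishes the argument. If $s$ is satisfiable, the integral of $q$ is a positive integer, hence at least $1$, so the integral of $p$ is a positive integer at least $M=3a(1^n)^2$. If $s$ is unsatisfiable, the integral of $q$ is $0$, so the integral of $p$ is $0$. The entire reduction, namely running $h$, computing $M$, and multiplying the constantly many integer coefficients of the single factor $q_1$ by $M$, runs in polynomial time.

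The only point that requires genuine care, and the step I expect to be the mild obstacle, is verifying that the amplification keeps both the $\prod\sum\Si_2$ format and the integrality of the value while leaving the construction polynomial time. This rests entirely on the observation that polynomial time computability of $a(1^n)$ bounds the bit-length of $M$ by a polynomial in $n$; without that bound, scaling could blow up the coefficient sizes and break the time bound, so I would state and use this size estimate explicitly.
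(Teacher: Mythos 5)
Your proposal is correct and follows essentially the same route as the paper: apply the algorithm $h$ of Lemma~\ref{main2-lemma} and scale the resulting polynomial by the constant $3a(1^n)^2$, which is computable in polynomial time because $a(1^n)$ is evaluated on a unary input. Your extra care in folding the constant into one sum factor to preserve the $\prod\sum\Si_2$ format, and in noting the bit-length bound on $3a(1^n)^2$, is a small refinement of details the paper leaves implicit rather than a different argument.
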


\begin{proof}
For a (3,3)-SAT problem $s(x_1,\cdots, x_n)$,  let $q(y_1,\cdots,
y_n)=h(s(x_1,\cdots,x_n)$ be constructed as Lemma~\ref{main2-lemma}.

Since $a(1^n)$ is polynomial time computable, let $p(y_1,\cdots,
y_n)=3a(1^n)^2q(y_1,\cdots, y_n)$, which can be computed in a
polynomial time.
%We construct some additional polynomials $q_1,\cdots, q_m$ as follows.
% Let $m=\ceiling{n^{4c}}$.
%For each $q_i$, create two new variables $w_{i,1}$ and $w_{i,2}$. Let $q_i=(2w_{i,1}+2w_{i,2})$.
%The final polynomial is $p'=p(y_1,\cdots, y_n)q_1\cdots q_m$. We
%have that if $s(x_1,\cdots, x_n)$ is not satisfiable, then
%$\int_{[0,1]^t} p'd_{X}$ is zero. Otherwise, the integration is at
%least $2^m$. ($X$ is the set of all variables in $p'$ and $t=|X|$.)
\end{proof}

\begin{theorem}
Let $a(1^n)$ be  a polynomial time computable function from $N$ to
$N^+$. Then there is no polynomial time
$(a(1^n),a(1^n))$-approximation for the integration problem
$\int_{[0,1]}f(x_1, \cdots,x_d)d_{x_1}\cdots d_{x_d}$ for  a
$\prod\sum\Si_2$ polynomial $f(.)$ unless $\Pnew=\NP$.
\end{theorem}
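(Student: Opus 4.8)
The plan is to reduce $(3,3)$-SAT to the decision problem implicit in such an approximation, using the \emph{amplified} reduction of Lemma~\ref{main3-lemma} to swamp the additive slack $s(n)=a(1^n)$. Suppose toward a contradiction that $A(\cdot)$ is a polynomial time $(a(1^n),a(1^n))$-approximation for the integration of $\prod\sum\Si_2$ polynomials. Given a $(3,3)$-SAT instance $s(x_1,\cdots,x_n)$, I would first invoke Lemma~\ref{main3-lemma} to produce, in polynomial time, a $\prod\sum\Si_2$ polynomial $p(y_1,\cdots,y_n)$ whose integral $J=\int_{[0,1]^{n}}p(y_1,\cdots,y_n)\,d_{y_1}\cdots d_{y_n}$ equals $0$ when $s$ is unsatisfiable and is a positive integer at least $3a(1^n)^2$ when $s$ is satisfiable.

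The heart of the argument is a case analysis of $A(J)$ showing the two cases land in disjoint intervals. When $s$ is unsatisfiable, $J=0\ge 0$, so by the definition of $(a(1^n),a(1^n))$-approximation we get $-a(1^n)\le A(J)\le a(1^n)$. When $s$ is satisfiable, $J\ge 3a(1^n)^2>0$, so the lower bound in the same definition yields
\[
A(J)\ \ge\ \frac{J}{a(1^n)}-a(1^n)\ \ge\ \frac{3a(1^n)^2}{a(1^n)}-a(1^n)\ =\ 2a(1^n).
\]
Since $a(1^n)\in N^+$ forces $a(1^n)\ge 1$, in the satisfiable case we have $A(J)\ge 2a(1^n)>a(1^n)$, strictly above the unsatisfiable band $[-a(1^n),a(1^n)]$. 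Hence $s$ is satisfiable if and only if $A(J)>a(1^n)$, and this comparison is decidable in polynomial time because $a(1^n)$ is polynomial time computable and $A$ runs in polynomial time.

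This gives a polynomial time decision procedure for $(3,3)$-SAT, which is NP-complete by Lemma~\ref{3-sat-lemma}, so $\Pnew=\NP$. The one point requiring care, and the reason Lemma~\ref{main3-lemma} inflates the integral by the factor $3a(1^n)^2$ rather than by a bare $a(1^n)$, is the interplay between the multiplicative factor and the additive term: dividing the amplified value by the multiplicative factor $a(1^n)$ leaves $3a(1^n)$, and subtracting the additive loss $a(1^n)$ still retains a margin of $2a(1^n)$, comfortably separated from the unsatisfiable band. I expect no genuine obstacle beyond verifying this separation of ranges; the remainder is a direct instantiation of the definitions together with Lemmas~\ref{main3-lemma} and~\ref{3-sat-lemma}.
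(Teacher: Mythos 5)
Your proposal is correct and follows essentially the same route as the paper: invoke Lemma~\ref{main3-lemma} to amplify the satisfiable-case integral to at least $3a(1^n)^2$, then separate the two cases via the bounds $A(J)\le a(1^n)$ (unsatisfiable) and $A(J)\ge 2a(1^n)$ (satisfiable), concluding via the NP-completeness of $(3,3)$-SAT from Lemma~\ref{3-sat-lemma}. The only cosmetic difference is your decision threshold $A(J)>a(1^n)$ versus the paper's $A(J)\ge 2a(1^n)$, which changes nothing.
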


 \begin{proof}
Assume that there is a polynomial time
$(a(1^n),a(1^n))$-approximation $App(.)$ for the integration problem
$\int_{[0,1]}f(x_1, \cdots,x_d)d_{x_1}\cdots d_{x_d}$ for  a
$\prod\sum\Si_2$ polynomial $f(.)$.

 Let $s(x_1,\cdots, x_n)$ be
an arbitrary $(3,3)$-SAT instance. Let $p(y_1,\cdots, y_n)$ be the
polynomial according to Lemma~\ref{main3-lemma}.

Let $J=\int_{[0,1]^{n}}p(y_1,\cdots, y_n)d_{y_1\cdots y_n}$. If
$s(x_1,\cdots, x_n)$ is not satisfiable, then $J=0$. Otherwise,
$J\ge 3a(1^n)^2$.

Assume that $s(x_1,\cdots,x_n)$ is not satisfiable.  Since $App(J)$
is an $(a(1^n),a(1^n))$-approximation, we have $App(J)\le {J\cdot
a(1^n)}+a(1^n)=a(1^n)$ by the definition in
section~\ref{notation-sec}.

Assume that $s(x_1,\cdots,x_n)$ is satisfiable.  Since $App(J)$ is
an $(a(1^n),a(1^n))$-approximation, we have $App(J)\ge {J\over
a(1^n)}-a(1^n)\ge {3a(1^n)^2\over a(1^n)}-a(1^n)=2a(1^n)$ by the
definition in section~\ref{notation-sec}.

Therefore, $s(x_1,\cdots,x_n)$ is satisfiable if and only if
$App(J)\ge 2a(1^n)$. Thus, if there is a polynomial time
$(a(1^n),a(1^n))$--approximation, then there is a polynomial time
algorithm for solving $(3,3)$-SAT. By Lemma~\ref{3-sat-lemma},
$\Pnew=\NP$.
 \end{proof}

The well known exponential time hypothesis says $\NP\not\subseteq
\subE$~\cite{ImpagliazzoPaturi99}. Basing on such a hypothesis, we
have the following stronger result about the intractability of high
dimension integration.

\begin{theorem}
Let $a(1^n)$ be  a polynomial time computable function from $N$ to
$N^+$. Then there is no subexponential  time
$(a(1^n),a(1^n))$-approximation for the integration problem
$\int_{[0,1]}f(x_1, \cdots,x_d)d_{x_1}\cdots d_{x_d}$ with  a
$\prod\sum\Si_2$ polynomial $f(.)$ unless $\NP\subseteq \subE$.
\end{theorem}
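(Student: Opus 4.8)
The plan is to mirror the proof of the preceding (polynomial time) theorem almost verbatim, replacing ``polynomial time'' by ``subexponential time'' throughout, and then to check that the one genuinely new ingredient---robustness of the subexponential bound under the reduction---goes through. First I would assume, for contradiction, that $App(.)$ is a subexponential time $(a(1^n),a(1^n))$-approximation for the integration of a $\prod\sum\Si_2$ polynomial. Given an arbitrary $(3,3)$-SAT instance $s(x_1,\cdots,x_n)$, I would invoke Lemma~\ref{main3-lemma} to produce, in polynomial time, a $\prod\sum\Si_2$ polynomial $p(y_1,\cdots,y_n)$ whose integral $J=\int_{[0,1]^n}p(y_1,\cdots,y_n)d_{y_1}\cdots d_{y_n}$ equals $0$ when $s$ is unsatisfiable and is at least $3a(1^n)^2$ when $s$ is satisfiable.

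The separation step is identical to the previous theorem. In the unsatisfiable case the additive-multiplicative guarantee gives $App(J)\le J\cdot a(1^n)+a(1^n)=a(1^n)$; in the satisfiable case it gives $App(J)\ge J/a(1^n)-a(1^n)\ge 3a(1^n)^2/a(1^n)-a(1^n)=2a(1^n)$. Hence $s$ is satisfiable if and only if $App(J)\ge 2a(1^n)$, and since $a(1^n)$ is computable and $App(J)$ is produced within the allotted time, this yields a decision procedure for $(3,3)$-SAT.

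The part requiring care---and the only real departure from the polynomial time argument---is the running-time accounting. The reduction of Lemma~\ref{main3-lemma} enlarges an $n$-variable instance to one on polynomially many variables, say $m=n^{O(1)}$, after which $App$ runs in time $2^{m^{o(1)}}$. I would observe that for $m=n^{O(1)}$ we have $m^{o(1)}=n^{o(1)}$, so the composite procedure still runs in $2^{n^{o(1)}}$ time; that is, $(3,3)$-SAT lies in $\subE$. Because $(3,3)$-SAT is NP-complete by Lemma~\ref{3-sat-lemma} and $\subE$ is closed under polynomial time many-one reductions by exactly this size bookkeeping, every language in $\NP$ reduces to $(3,3)$-SAT in polynomial time and is therefore decidable in subexponential time, giving $\NP\subseteq\subE$. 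I expect this closure-under-reduction step to be the main (though mild) obstacle, since it is the one place where the subexponential bound must be shown stable under the polynomial blow-up of the reduction, rather than merely inherited as in the polynomial time case.
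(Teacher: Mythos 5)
Your proposal is correct and follows essentially the same route as the paper: assume a subexponential $(a(1^n),a(1^n))$-approximation, apply Lemma~\ref{main3-lemma} to get the gap $J=0$ versus $J\ge 3a(1^n)^2$, and use the threshold $2a(1^n)$ to decide $(3,3)$-SAT. The only difference is that you make explicit the running-time bookkeeping (that $2^{m^{o(1)}}$ with $m=n^{O(1)}$ is still $2^{n^{o(1)}}$), which the paper leaves implicit; this is a harmless and indeed welcome addition.
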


 \begin{proof}
Assume that there is a subexponential  time
$(a(1^n),a(1^n))$-approximation $App(.)$ for the integration problem
$\int_{[0,1]}f(x_1, \cdots,x_d)d_{x_1}\cdots d_{x_d}$ with  a
$\prod\sum\Si_2$ polynomial $f(.)$.

 Let $s(x_1,\cdots, x_n)$ be
an arbitrary $(3,3)$-SAT instance. Let $p(y_1,\cdots, y_n)$ be the
polynomial according to Lemma~\ref{main3-lemma}. Let
$J=\int_{[0,1]^{n}}p(y_1,\cdots, y_n)d_{y_1\cdots y_n}$. By
Lemma~\ref{main3-lemma}, we have $J\ge 0$.

If $s(x_1,\cdots, x_n)$ is not satisfiable, then $J=0$. Otherwise,
$J\ge 3a(1^n)^2$.

Assume that $s(x_1,\cdots,x_n)$ is not satisfiable.  Since $App(J)$
is an $(a(1^n),a(1^n))$-approximation,  we have $App(J)\le {J\cdot
a(1^n)}+a(1^n)=a(1^n)$ by the definition in
section~\ref{notation-sec}.

Assume that $s(x_1,\cdots,x_n)$ is satisfiable.  Since $App(J)$ is
an $(a(1^n),a(1^n))$-approximation, we have $App(J)\ge {J\over
a(1^n)}-a(1^n)\ge {3a(1^n)^2\over a(1^n)}-a(1^n)=2a(1^n)$ by the
definition in section~\ref{notation-sec}.

Therefore, $s(x_1,\cdots,x_n)$ is satisfiable if and only if
$App(J)\ge 2a(1^n)$. Thus, if there is a subexponential  time
$(a(1^n),a(1^n))$-approximation, then there is a subexponential time
algorithm for solving $(3,3)$-SAT. By Lemma~\ref{3-sat-lemma},
$\NP\subseteq \subE$.
 \end{proof}

\section{Inapproximation of
Derivative}\label{inapprox-derivative-sec}

In this section, we study the hardness of high dimensional
derivative. We derive the inapproximation results under both
$\NP\not=\Pnew$ and $\NP\not\subseteq \subE$ assumptions.

\begin{definition}%\scrod
%\begin{itemize}
%\item
A monomial is an expression $x_1^{a_1}\cdots x_d^{a_d}$ and its
degree is $a_1+\cdots+a_d$. A monomial $x_1^{a_1}\cdots x_d^{a_d}$,
in which $x_1,\cdots, x_d$ are different variables, is a {\it
multilinear} if $a_1=a_2=\cdots=a_d=1$.
%\item
%For an integer $t$,  a $\prod\sum\prod_t$ polynomial $g(x_1,\cdots,
%x_d)$ has a format $p_1(x_1,\cdots,x_d)\cdots p_k(x_1,\cdots,x_d)$
%such that each $p_i(x_1,\cdots, x_d)$ is a sum of monomials of
%degree $t$.
%\end{itemize}
\end{definition}

For example, $(x_1x_3+x_2^2)(x_2x_4+x_3^2)$ is a $\prod\sum\prod_2$
polynomial. It has a multilinear monomial $x_1x_2x_3x_4$ in its sum
of products expansion.

We give Lemma~\ref{derivative-lemma} to convert an instance $f$ for
$(3,3)$-SAT into a $\prod\sum\prod_2$ polynomial. The technology
developed in~\cite{ChenFu10a,ChenFu10c} will be applied in the
construction.

\begin{lemma}\label{derivative-lemma}
%Let $k$ be an arbitrary fixed integer.
Let $a(1^n)$ be a polynomial time computable function from $N$ to
$N^+$.
 Then there is a polynomial time algorithm $A$ such that given a
$(3,3)$-SAT instance $F(y_1,\cdots, y_d)$, the algorithm returns a
$\prod\sum\prod_2$ polynomial $G(x_1,\cdots, x_n)$ such that
\begin{enumerate}[1.]
\item
If $F$ is not satisfiable, then $G$ does not have a multiliear
monomial with an nonzero coefficient in its sum of product
expansion.
\item
If $F$ is satisfiable, then $G$ has the multiliear monomial
$x_1\cdots x_n$ with a positive integer coefficient at least
$3a(1^n)^2$ in its sum of product expansion.
\end{enumerate}
\end{lemma}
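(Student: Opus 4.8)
The plan is to mirror the integration reduction (Lemma~\ref{main2-lemma} and Lemma~\ref{main3-lemma}), replacing the ``orthogonality under integration'' of the functions $g_1,g_2,f$ by an ``orthogonality under multilinearity'': I encode each literal occurrence by a degree-$2$ monomial in fresh variables so that a product term of the sum-of-products expansion of $G$ survives as a multilinear monomial precisely when the literals selected from the clauses are mutually consistent, i.e. no variable is chosen both positively and negatively. Conflicting choices will be made to produce a squared variable, which destroys multilinearity, exactly as a factor $g_j(y_i)f(y_i)$ integrated to $0$ in the earlier argument.

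First I record the bridge between the derivative and the coefficient we want to control. For any polynomial $G(x_1,\cdots,x_n)$, applying $\partial^{(n)}/(\partial x_1\cdots\partial x_n)$ and evaluating at the origin annihilates every monomial that omits some variable $x_\ell$ (killed by $\partial/\partial x_\ell$) and every monomial in which some variable occurs with exponent $\ge 2$ (such a variable survives to first order and vanishes at $0$); only the full multilinear monomial $x_1\cdots x_n$ contributes, and it contributes exactly its coefficient. Thus it suffices to build $G$ whose coefficient of $x_1\cdots x_n$ equals $0$ when $F$ is unsatisfiable and is a positive integer when $F$ is satisfiable, which is what conditions~1 and~2 assert (condition~1 being the stronger statement that no multilinear monomial occurs at all).

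Next I describe the reduction, starting from a $(3,3)$-SAT instance $F$ (recall by Lemma~\ref{3-sat-lemma} that $(3,3)$-SAT is NP-complete, and that every variable has at most three occurrences, at most one negated). Writing $F=C_1\cdots C_m$ as a product of clauses, each clause a sum of its literals, I replace each literal occurrence by a degree-$2$ monomial following the three roles of Lemma~\ref{orthogonal2-lemma}: the first positive occurrence of $y_i$, the second positive occurrence, and the negative occurrence $\overline{y_i}$ receive three different monomials. These are chosen so that (i) any product that selects one literal per clause using only mutually consistent (purely positive, or purely the negative) occurrences of the distinct variables covers each variable of $G$ exactly once, hence equals the target monomial $x_1\cdots x_n$; and (ii) any product selecting both a positive and the negative occurrence of the same $y_i$ repeats a variable and is non-multilinear. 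This is the polynomial analogue of $\int g_1f=\int g_2f=\int g_1g_2f=0$ together with $\int g_1,\int g_2,\int g_1g_2>0$. To realize (i) and (ii) with factors of degree at most $2$ I use the monomial-testing gadgets of~\cite{ChenFu10a,ChenFu10c}; the bound of at most two positive and one negative occurrence per variable, guaranteed by the $(3,3)$ restriction, is exactly what keeps each gadget of bounded size and of degree $2$. A product term of the expansion is then a multilinear monomial iff the selected literals form a consistent partial assignment, and a consistent selection of one literal per clause is precisely a satisfying assignment of $F$; hence $F$ is satisfiable iff $G$ has a multilinear monomial, and in that case the surviving monomial is the full $x_1\cdots x_n$. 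To meet the quantitative bound I amplify exactly as in Lemma~\ref{main3-lemma}: since $a(1^n)$ is polynomial-time computable I multiply $G$ by the constant $3a(1^n)^2$, folding the constant into one factor so the polynomial stays $\prod\sum\prod_2$, which multiplies the coefficient of $x_1\cdots x_n$ by $3a(1^n)^2$ and keeps it $0$ in the unsatisfiable case. The construction replaces each literal by a bounded-degree monomial and appends one constant factor, so it runs in polynomial time.

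The main obstacle is steps~(i)--(ii): forcing every satisfying selection to tile into the single target monomial $x_1\cdots x_n$ while every polarity conflict creates a squared variable, all using factors of degree at most $2$. Unlike the integration case, where integration collapses each term to a number so that ``the same target'' is automatic, here distinct satisfying selections a priori yield distinct monomials, so the encoding must pad each choice to cover a fixed variable set exactly once; guaranteeing simultaneously that an unsatisfiable instance leaves \emph{no} multilinear monomial whatsoever is the delicate point, and it is precisely what the monomial-testing theory of~\cite{ChenFu10a,ChenFu10c} supplies.
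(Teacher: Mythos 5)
Your high-level plan is the same as the paper's --- replace each literal occurrence by a degree-$2$ monomial in fresh variables so that a polarity conflict forces a squared variable, then pad so that every surviving term becomes the single target monomial $x_1\cdots x_n$, then scale by $3a(1^n)^2$ --- but the proof as written has a genuine gap: the two constructions that constitute the actual content of the lemma are never exhibited. You state properties (i) and (ii) that the gadget must satisfy, concede in your closing paragraph that achieving them ``is the delicate point,'' and then outsource that point to the cited monomial-testing papers. Worse, property (i) as you state it --- that the literal replacement alone makes every consistent selection ``cover each variable of $G$ exactly once, hence equal the target monomial $x_1\cdots x_n$'' --- cannot be realized by any per-literal substitution: distinct satisfying selections touch different variable sets, so no local encoding can force them all onto one monomial. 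The paper does not attempt this; it separates the two concerns.

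Concretely, the paper's proof does the following. For each SAT variable $y_i$ it introduces $z_{i,1},z_{i,2},u_{i,1},u_{i,2}$ and replaces the first positive occurrence by $z_{i,1}u_{i,1}$, the second positive occurrence by $z_{i,2}u_{i,2}$, and the (unique, by the $(3,3)$ restriction) negative occurrence by $z_{i,1}z_{i,2}$; a term selecting both a positive and the negative occurrence of $y_i$ then contains $z_{i,1}^2$ or $z_{i,2}^2$, while any consistent selection is multilinear. This yields $G_1$, in which every monomial has the \emph{same} degree $d_1$ (each literal contributes exactly degree $2$) over $m=4d$ variables. The full-monomial property is then obtained by a separate padding step: introduce $v_1,\ldots,v_{m-d_1}$ and multiply by $q_j=\sum_{x\in H_1}xv_j$, which lifts each multilinear monomial of $G_1$ to the product of all $n=m+(m-d_1)$ variables (in $(m-d_1)!$ ways, all with coefficient $+1$, so no cancellation) and cannot create a multilinear monomial from a non-multilinear one. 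Without writing down these two gadgets --- or some explicit substitute --- your argument establishes only that a construction with the desired properties \emph{would} prove the lemma, not that one exists; the appeal to \cite{ChenFu10a,ChenFu10c} does not discharge this, since those references do not contain this particular reduction. Your preliminary observation identifying the $n$-th mixed partial derivative at the origin with the coefficient of $x_1\cdots x_n$ is correct but belongs to the theorems that use this lemma, not to the lemma itself.
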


\begin{proof}
Let $(3,3)$-SAT instance $F$ be $C_1C_2\cdots C_k$. Each clause
$C_i$ has format $y_{i_1}^*+y_{i_2}^*+ y_{i_3}^*$, where literal
$y_j^*$ is either $y_j$ or its negation $\overline{y}_j$. Since $F$
is a $(3,3)$-SAT instance, for each variable $y_i$ in $F$, $y_i$ and
$\overline{y}_i$ totally appear at most three times in $F$, and
$\overline{y}_i$ appears at most once in $F$.

For each variable $y_i$ in $F$, create four new variables $z_{i,1}$,
$z_{i,2}$, $u_{i,1}$ and $u_{i,2}$.  Convert formula $F$ into
polynomial $G_1$ such that for each $y_i$ in $F$, the first positive
occurrence $y_i$  is changed into $z_{i,1}u_{i,1}$, the second
positive occurrence $y_i$ is changed into $z_{i,2}u_{i,2}$, and the
negative occurrence $\overline{y}_i$ is changed to $z_{i,1}z_{i,2}$.
After the conversion for all the variables, formula $F$ is
transformed into a polynomial $G_1 $. We have that $F$ is
satisfiable if and only if $G_1 $ has a multilinear monomial with
positive coefficient in its sum of products expansion. This is
because a multiliear monomial in the sum of product expansion of
$G_1$ corresponds a consistent conjunctive term, which does not
contain both $y_i$ and its negation $\overline{y}_i$ for some
variable $y_i$,  in the sum of product expansion of $F$.

%Let $h=n^{k}$. Create $h$ polynomials $q_1,\cdots, q_h$, where each
%$q_i=(u_{i,1}v_{i,1}+u_{i,2}v_{i,2})$ with all new variables that
%are never used before. Let the new $\prod\sum\prod_2$ polynomial be
%$g_2=g_1(y_1,\cdots, y_d)q_1\cdots q_m$. If $f$ is not satisfiable,
%then we have $g_2$ has no multilinear monomial with nonzero
%coefficient. Otherwise, $g_2$ has at least $2^h=2^{n^k}$ multilinear
%monomials in its sum of products expansion.

Let $H_1$ be the set of all variables in $G_1 $. Assume that $d_1$
is the degree of $G_1 $ (it is easy to see that all monomials in the
sum-product expansion of $G_1$ have the same degree $d_1$). Let $m$
be the number of variables in $H_1$. Let $d$ be the number of
boolean variables in $F$. Assume that no clause $C_i$ in $F$
contains a single literal (otherwise, we can force the literal to be
true to simplify $F$). The number of clauses in $F$ is at most
${3d\over 2}$ since each variable appears in $F$ at most three times
and each clause $C_i$ of $F$ contains at least two literals. The
degree $G_1$ is at most $3d$ since each literal of $F$ is replaced
by a product of two variables. The number of variables in $G_1$ is
$m=4d$ (we create four new variables for each variable in $F$),
which is larger than the degree of $G_1$.

Create new variables $v_1,\cdots, v_{m-d_1}$.
 For $j=1,\cdots, m-d_1$, let $q_j=\sum_{x\in H_1} xv_j$.
 Finally, we get the polynomial $G=3a(1^n)^2\cdot  G_1\cdot q_1\cdots q_{m-d_1}$, where $n=m+(m-d_1)$.  Note that $3a(1^n)^2$ in
the polynomial $G$ is considered an integer constant which does not
contain any variable. The degree of $G$ is
 $n=d_1+2(m-d_1)=m+(m-d_1)$.
Thus, the
 degree of $G$ is the same as the total number of variables in $g$.
 % (note that $m-d_1$ is the number of variables among $v_1,\cdots, v_{m-d_1}$).
 We can show that
$f$ is satisfiable if and only if there is a multilinear monomial,
which is the product of all variables in $G$, with positive
coefficient of size at least $3a(1^n)^2$.
\end{proof}

\begin{theorem}Assume that $r(n)$ is a function from $N$ to $N^+$.
If there is a polynomial time algorithm $A$ such that given a
$\prod\sum\prod_2$ polynomial $g(x_1,\cdots,x_n)$,  it gives an
$r(n)$-factor approximation to ${\partial
g^{(n)}(x_1,\cdots,x_n)\over
\partial x_1\cdots \partial x_n}$ at the origin point $(x_1,\cdots,x_n)=(0,\cdots,0)$, then $\Pnew=\NP$.
\end{theorem}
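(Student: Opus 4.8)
The plan is to reduce $(3,3)$-SAT to the approximation of the derivative, paralleling the proof of Theorem~\ref{main-theorem} for integration, with one new algebraic observation playing the role that Lemma~\ref{orthogonal2-lemma} played there: the value of the full mixed partial derivative at the origin extracts exactly the coefficient of the multilinear monomial $x_1\cdots x_n$ in the sum-of-products expansion.

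First I would record the elementary identity that, for any polynomial $G$ in $n$ variables,
$$ J := \left.{\partial G^{(n)}(x_1,\cdots,x_n)\over \partial x_1\cdots \partial x_n}\right|_{(0,\cdots,0)} $$
equals the coefficient of the monomial $x_1 x_2\cdots x_n$ in $G$. Indeed, differentiating a single monomial $x_1^{a_1}\cdots x_n^{a_n}$ once in each variable gives $a_1\cdots a_n\, x_1^{a_1-1}\cdots x_n^{a_n-1}$ when every $a_i\ge 1$, and $0$ if some $a_i=0$; evaluated at the origin this survives only when $a_i-1=0$ for all $i$, i.e.\ only for the monomial $x_1\cdots x_n$, where it contributes exactly that monomial's coefficient. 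Summing over all monomials of $G$ yields the claim, so the derivative functor value is precisely this coefficient.

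Second I would apply Lemma~\ref{derivative-lemma} with the constant, trivially polynomial-time computable function $a(1^n)\equiv 1$. Given any $(3,3)$-SAT instance $F$, the lemma produces in polynomial time a $\prod\sum\prod_2$ polynomial $G(x_1,\cdots,x_n)$ whose coefficient of $x_1\cdots x_n$ is a positive integer (at least $3$) when $F$ is satisfiable and is $0$ when $F$ is unsatisfiable. Combined with the first step, $J>0$ precisely when $F$ is satisfiable and $J=0$ otherwise. Then I would feed $G$ into the hypothesized polynomial-time $r(n)$-factor approximation $A$. Since $J\ge 0$ always, the definition in Section~\ref{notation-sec} forces $A(G)\in[J/r(n),\,r(n)J]$: when $J=0$ this interval is $[0,0]$ so $A(G)=0$, and when $J>0$ it lies in $(0,+\infty)$ so $A(G)>0$. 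Hence $F$ is satisfiable if and only if $A(G)>0$, deciding $(3,3)$-SAT in polynomial time, which is NP-complete by Lemma~\ref{3-sat-lemma}; therefore $\Pnew=\NP$.

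The only genuinely new ingredient is the \emph{derivative-coefficient identity} of the first step; the remainder is structurally identical to the integration argument. I do not expect a real obstacle, since Lemma~\ref{derivative-lemma} already performs all the combinatorial work of the reduction. The one point worth emphasizing is that a multiplicative $r(n)$-factor approximation can never bridge the gap between $0$ and a positive value, so no lower bound on the magnitude of the coefficient is needed for this theorem; the factor $3a(1^n)^2$ supplied by the lemma becomes relevant only for the additive $(r(n),s(n))$-approximation variants.
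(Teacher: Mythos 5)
Your proposal is correct and follows essentially the same route as the paper's own proof: reduce from $(3,3)$-SAT via Lemma~\ref{derivative-lemma}, observe that the mixed partial at the origin equals the coefficient of $x_1\cdots x_n$, and note that a multiplicative $r(n)$-factor approximation must output $0$ exactly when that coefficient is $0$. You are in fact slightly more explicit than the paper in two places --- justifying the derivative-coefficient identity and fixing $a(1^n)\equiv 1$ in the lemma --- but these are details the paper leaves implicit, not a different argument.
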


\begin{proof} Assume that $A(.)$ is a polynomial time
$r(n)$-approximation for computing ${\partial
g^{(n)}(x_1,\cdots,x_n)\over
\partial x_1\cdots \partial x_n}$ at the origin point
$(x_1,\cdots,x_n)=(0,\cdots,0)$.

 Assume that $f$ is an arbitrary
formula in a $(3,3)$-SAT problem. By Lemma~\ref{derivative-lemma},
we can get a polynomial $g(x_1,\cdots,x_n)$.
 The derivative ${\partial g^{(n)}(0,\cdots,0)\over
\partial x_1\cdots \partial x_n}$ is equal to the coefficient of $x_1\cdots
x_n$ in the sum of product expansion of $g$.

If $f$  is satisfiable, we have $A(g)>0$, and if $f$  is not
satisfiable, we have $A(g)=0$ since $A(.)$ is a $r(n)$-approximation
and $r(n)\ge 1$. We can know if the coefficient of $x_1\cdots x_n$
in the sum of product expansion of $g$ is positive in polynomial
time. Thus, $(3,3)$-SAT is solvable in polynomial time. Since
$(3,3)$-SAT is NP-complete, we have $\Pnew=\NP$.
\end{proof}

 Basing on exponential time hypothesis  $\NP\not\subseteq \subE$~\cite{ImpagliazzoPaturi99}, we have the following
stronger result about the intractability of high dimension
derivative.

\begin{theorem}Assume that $r(n)$ is a function from $N$ to $N^+$.
If there is a subexponential  time algorithm $A$ such that given a
$\prod\sum\prod_2$ polynomial $g(x_1,\cdots,x_n)$, it gives an
$r(n)$-factor approximation to ${\partial
g^{(n)}(x_1,\cdots,x_n)\over
\partial x_1\cdots \partial x_n}$ at the origin point $(x_1,\cdots,x_n)=(0,\cdots,0)$, then $\NP\subseteq \subE$.
\end{theorem}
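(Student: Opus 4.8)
The plan is to mirror the preceding polynomial-time theorem almost verbatim, replacing ``polynomial time'' by ``subexponential time'' throughout and carefully tracking the overhead contributed by the reduction. First I would assume that $A(.)$ is a subexponential-time $r(n)$-factor approximation for the derivative ${\partial g^{(n)}(x_1,\cdots,x_n)/\partial x_1\cdots\partial x_n}$ at the origin for a $\prod\sum\prod_2$ polynomial $g$. Given an arbitrary $(3,3)$-SAT instance $f$, I would invoke Lemma~\ref{derivative-lemma} (with any fixed polynomial-time computable choice of $a(1^n)$, for instance the constant function $1$) in polynomial time to produce the associated $\prod\sum\prod_2$ polynomial $g(x_1,\cdots,x_n)$.

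The key observation, already exploited in the polynomial-time version, is that the value of the iterated derivative at the origin equals the coefficient of the full multilinear monomial $x_1\cdots x_n$ in the sum-of-products expansion of $g$. By Lemma~\ref{derivative-lemma}, this coefficient is a positive integer when $f$ is satisfiable and is $0$ when $f$ is unsatisfiable. Since $r(n)\ge 1$ (because $r$ maps into $N^+$), an $r(n)$-factor approximation preserves the distinction between a positive value and zero: one has $A(g)>0$ exactly when $f$ is satisfiable, and $A(g)=0$ exactly when $f$ is unsatisfiable. Hence deciding $(3,3)$-SAT reduces to computing $g$ and testing the sign of $A(g)$, both of which can be checked as in the previous theorem.

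The step requiring genuine care is the time accounting, which is the only place this argument differs substantively from its polynomial-time counterpart. The reduction of Lemma~\ref{derivative-lemma} runs in polynomial time and, crucially, produces a polynomial $g$ whose number of variables $n=m+(m-d_1)$ with $m=4d$ is only linear in the number $d$ of boolean variables of $f$; thus $n$ is polynomially bounded in the length of $f$. Consequently, composing the polynomial-time reduction with the subexponential-time algorithm $A$, whose running time is $2^{n^{o(1)}}$, yields a total running time that is still $2^{|f|^{o(1)}}$, i.e. subexponential in the length of $f$. This gives a subexponential-time decision procedure for $(3,3)$-SAT, which is NP-complete by Lemma~\ref{3-sat-lemma}, and therefore $\NP\subseteq\subE$. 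I expect the main (and only mild) obstacle to be precisely this closure argument: confirming that composing a polynomial-time preprocessing step with a $2^{n^{o(1)}}$-time algorithm remains within $\subE$, which holds because the parameter $n$ grows only polynomially under the reduction and hence $n^{o(1)}$ stays $o(1)$ in the original input length.
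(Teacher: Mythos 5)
Your proposal is correct and follows essentially the same route as the paper: reduce from $(3,3)$-SAT via Lemma~\ref{derivative-lemma}, use the fact that the iterated derivative at the origin equals the coefficient of $x_1\cdots x_n$, and note that an $r(n)$-factor approximation with $r(n)\ge 1$ distinguishes zero from a positive integer. Your explicit check that composing the polynomial-time reduction (which keeps $n$ polynomial in $|f|$) with a $2^{n^{o(1)}}$-time algorithm stays subexponential is a detail the paper leaves implicit, but it is the same argument.
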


\begin{proof}
 Assume that $A(.)$ is a subexponential time
$r(n)$-approximation for computing ${\partial
g^{(n)}(x_1,\cdots,x_n)\over
\partial x_1\cdots \partial x_n}$ at the origin point
$(x_1,\cdots,x_n)=(0,\cdots,0)$.

 Assume that $f$ is an arbitrary
formula in a $(3,3)$-SAT problem. By Lemma~\ref{derivative-lemma},
we can get a polynomial $g(x_1,\cdots,x_n)$.
 The derivative ${\partial g^{(n)}(0,\cdots,0)\over
\partial x_1\cdots \partial x_n}$ is equal to the coefficient of $x_1\cdots
x_n$ in the sum of product expansion of $g$.

If $f$  is satisfiable, we have $A(g)>0$, and if $f$  is not
satisfiable, we have $A(g)=0$ since $A(.)$ is a $r(n)$-approximation
and $r(n)\ge 1$. We can know if the coefficient of $x_1\cdots x_n$
in the sum of product expansion of $g$ is positive in subexponential
time. Thus, $(3,3)$-SAT is solvable  in subexponential time. Since
$(3,3)$-SAT is NP-complete, we have $\NP\subseteq \subE$.
\end{proof}

\begin{theorem}\label{dula-approx-derivative-theorem}
Let $a(1^n)$ be a polynomial time computable function from $N$ to
$N^+$. Then there is no polynomial time
$(a(1^n),a(1^n))$-approximation for ${\partial
g^{(n)}(x_1,\cdots,x_n)\over
\partial x_1\cdots \partial x_n}$ with $g(x_1,\cdots,x_n)$ as
 a $\prod\sum\prod_2$ polynomial at the origin point $(x_1,\cdots,x_n)=(0,\cdots,0)$, unless  $\Pnew=\NP$.
\end{theorem}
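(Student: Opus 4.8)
The plan is to mirror the argument used for the $(a(1^n),a(1^n))$-approximation of the integration, replacing the integration gap lemma by Lemma~\ref{derivative-lemma}. First I would assume, for contradiction, that $App(.)$ is a polynomial time $(a(1^n),a(1^n))$-approximation for the derivative ${\partial g^{(n)}(x_1,\cdots,x_n)\over\partial x_1\cdots\partial x_n}$ at the origin for every $\prod\sum\prod_2$ polynomial $g$. Given an arbitrary $(3,3)$-SAT instance $F$, I would run the polynomial time algorithm of Lemma~\ref{derivative-lemma} to obtain a $\prod\sum\prod_2$ polynomial $G(x_1,\cdots,x_n)$, and set $J={\partial G^{(n)}(0,\cdots,0)\over\partial x_1\cdots\partial x_n}$. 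The key observation I would invoke is the one already recorded in the preceding two theorems: this derivative at the origin equals the coefficient of the multilinear monomial $x_1\cdots x_n$ in the sum of product expansion of $G$. Hence $J=0$ when $F$ is not satisfiable, and $J\ge 3a(1^n)^2$ when $F$ is satisfiable, by the two cases of Lemma~\ref{derivative-lemma}.

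Next I would feed this gap through the definition of an $(a(1^n),a(1^n))$-approximation from Section~\ref{notation-sec}, exactly as in the integration analogue. In the unsatisfiable case $J=0\ge 0$, so $App(J)\le J\cdot a(1^n)+a(1^n)=a(1^n)$. In the satisfiable case $J\ge 3a(1^n)^2>0$, so $App(J)\ge {J\over a(1^n)}-a(1^n)\ge {3a(1^n)^2\over a(1^n)}-a(1^n)=2a(1^n)$. Since $a(1^n)\ge 1$, the two ranges are disjoint, and therefore $F$ is satisfiable if and only if $App(J)\ge 2a(1^n)$.

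Finally, because $a(1^n)$ is polynomial time computable and $G$ is produced in polynomial time, the whole test runs in polynomial time, so a polynomial time algorithm for $(3,3)$-SAT follows; by Lemma~\ref{3-sat-lemma} this gives $\Pnew=\NP$, completing the contradiction.

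As to difficulty, the reduction itself is routine once Lemma~\ref{derivative-lemma} is in hand: the real content lives in that lemma, whose gadget converts each literal into a product of two fresh variables and then pads the polynomial with the factors $q_j=\sum_{x\in H_1}xv_j$ so that the target monomial has degree equal to the total number of variables, while amplifying the coefficient by the constant factor $3a(1^n)^2$. The only point requiring care in the present proof is to keep the two approximation thresholds $a(1^n)$ and $2a(1^n)$ strictly separated; this is precisely what the amplification factor $3a(1^n)^2$ guarantees, since it forces the satisfiable value of $J$ to be large enough that even the worst-case lower approximation ${J/a(1^n)}-a(1^n)$ stays above the unsatisfiable upper bound $a(1^n)$.
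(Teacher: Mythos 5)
Your proposal is correct and follows essentially the same route as the paper's own proof: reduce from $(3,3)$-SAT via Lemma~\ref{derivative-lemma}, use the gap between $0$ and $3a(1^n)^2$ to separate the approximation thresholds $a(1^n)$ and $2a(1^n)$, and conclude $\Pnew=\NP$ via Lemma~\ref{3-sat-lemma}. Your explicit remark that the derivative at the origin equals the coefficient of $x_1\cdots x_n$ is a helpful detail the paper only states in the preceding theorems.
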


\begin{proof}
%Let $F(y_1,\cdots,y_d)$ be an arbitrary $(3,3)$-SAT instance.
Assume that $App(.)$ is a polynomial time
$(a(1^n),a(1^n))$-approximation for computing ${\partial
g^{(n)}(x_1,\cdots,x_n)\over
\partial x_1\cdots \partial x_n}$ at the origin point
$(x_1,\cdots,x_n)=(0,\cdots,0)$.

Given an arbitrary $(3,3)$-SAT instance $F$, let the
$\prod\sum\prod_2$ polynomial $G(x_1,\cdots, x_n)$
 be constructed according to
Lemma~\ref{derivative-lemma}.

 If $F$ is not satisfiable, then we have
${\partial G^{(n)}(x_1,\cdots,x_n)\over
\partial x_1\cdots \partial x_n}$ is zero at the origin point
$(x_1,\cdots,x_n)=(0,\cdots,0)$. Otherwise, it is at least
$3a(1^n)^2$.

Assume that $F$ is not satisfiable.  Since $App(J)$ is an
$(a(1^n),a(1^n))$-approximation, we have $App(J)\le {J\cdot
a(1^n)}+a(1^n)=a(1^n)$ by the definition in
section~\ref{notation-sec}.

Assume that $F$ is satisfiable.  Since $App(J)$ is an
$(a(1^n),a(1^n))$-approximation, we have $App(J)\ge {J\over
a(1^n)}-a(1^n)\ge {3a(1^n)^2\over a(1^n)}-a(1^n)=2a(1^n)$ by the
definition in section~\ref{notation-sec}.

Therefore, $F$ is satisfiable if and only if $App(J)\ge 2a(1^n)$.
Thus, if there is a polynomial time $(a(1^n),a(1^n))$-approximation,
then there is a polynomial time algorithm for solving $(3,3)$-SAT.
By Lemma~\ref{3-sat-lemma}, $\Pnew=\NP$.
 \end{proof}

\begin{theorem}
Let $a(1^n)$ be a polynomial time computable function from $N$ to
$N^+$. Then there is no subexponential  time
$(a(1^n),a(1^n))$-approximation for ${\partial
g^{(n)}(x_1,\cdots,x_n)\over
\partial x_1\cdots \partial x_n}$ at the origin point $(x_1,\cdots,x_n)=(0,\cdots,0)$, unless  $\NP\subseteq \subE$.
\end{theorem}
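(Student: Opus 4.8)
The plan is to run the same reduction from $(3,3)$-SAT used in Theorem~\ref{dula-approx-derivative-theorem}, changing only the resource bound from polynomial time to subexponential time and the conclusion from $\Pnew=\NP$ to $\NP\subseteq\subE$. Assume for contradiction that $App(.)$ is a subexponential time $(a(1^n),a(1^n))$-approximation for the derivative ${\partial g^{(n)}(x_1,\cdots,x_n)/\partial x_1\cdots\partial x_n}$ at the origin for $\prod\sum\prod_2$ polynomials. Given an arbitrary $(3,3)$-SAT instance $F$, I would first invoke Lemma~\ref{derivative-lemma} to construct in polynomial time a $\prod\sum\prod_2$ polynomial $G(x_1,\cdots,x_n)$ whose full multilinear monomial $x_1\cdots x_n$ carries coefficient $0$ when $F$ is unsatisfiable and coefficient at least $3a(1^n)^2$ when $F$ is satisfiable.

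The next step is to identify the quantity we are approximating. At the origin, ${\partial G^{(n)}(0,\cdots,0)/\partial x_1\cdots\partial x_n}$ equals exactly the coefficient of $x_1\cdots x_n$ in the sum-of-products expansion of $G$, since every other monomial either is missing a variable (and so is annihilated by the corresponding partial derivative) or retains a positive power of some variable (and so vanishes when evaluated at $0$). Writing $J$ for this derivative, Lemma~\ref{derivative-lemma} gives $J=0$ in the unsatisfiable case and $J\ge 3a(1^n)^2$ in the satisfiable case.

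I would then carry out the threshold-gap analysis using the definition of an $(a(1^n),a(1^n))$-approximation from section~\ref{notation-sec}. When $F$ is unsatisfiable, $J=0$ forces $App(J)\le a(1^n)\cdot J + a(1^n)=a(1^n)$. When $F$ is satisfiable, the lower bound gives $App(J)\ge J/a(1^n)-a(1^n)\ge 3a(1^n)^2/a(1^n)-a(1^n)=2a(1^n)$. Hence $F$ is satisfiable if and only if $App(J)\ge 2a(1^n)$, and since $a(1^n)$ is polynomial time computable, this test is decidable within the running time of $App$. The entire decision procedure for $(3,3)$-SAT therefore runs in subexponential time, so by Lemma~\ref{3-sat-lemma} we obtain $\NP\subseteq\subE$.

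There is essentially no genuine obstacle in this final theorem itself: all the technical difficulty has already been absorbed into Lemma~\ref{derivative-lemma}, whose construction of $G$ (via the monomial testing theory of \cite{ChenFu10a,ChenFuLiuSchweller10,ChenFu10c}) is what establishes the needed gap between the satisfiable and unsatisfiable cases. The only point requiring care here is bookkeeping of the additive term $a(1^n)$ in the $(a(1^n),a(1^n))$-approximation so that the two cases land on opposite sides of the $2a(1^n)$ threshold with the quadratic blow-up $3a(1^n)^2$ from Lemma~\ref{derivative-lemma} providing exactly the margin required; this is the same arithmetic as in Theorem~\ref{dula-approx-derivative-theorem}.
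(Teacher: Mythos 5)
Your proposal is correct and follows essentially the same route the paper takes: the paper's own proof simply states that the argument is similar to that of Theorem~\ref{dula-approx-derivative-theorem}, and what you have written out is exactly that argument with the resource bound relaxed to subexponential time and the conclusion changed to $\NP\subseteq\subE$, relying on Lemma~\ref{derivative-lemma} for the gap and on the identification of the derivative at the origin with the coefficient of $x_1\cdots x_n$. The threshold computation ($App(J)\le a(1^n)$ versus $App(J)\ge 2a(1^n)$) matches the paper's bookkeeping exactly.
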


\begin{proof}
The proof is similar to that of
Theorem~\ref{dula-approx-derivative-theorem}.
\end{proof}

%\newpage

%{\LARGE Appendix}

\section{Some Tractable Integrations and
Derivatives}\label{tractable-sec}
 In this section, we present some
polynomial time algorithms for integration with some restrictions.
We also show a case that the derivative can fully polynomial time
approximation scheme.

\subsection{Bounded Width Product}
\begin{definition}
A formula $f_1\cdot f_2\cdots f_m$ is $c$-wide if for each variables
$x_i$, there is an index $j$ such that $x_i$ only appears in $f_j,
f_{j+1},\cdots f_{j+c-1}$, where each $f_i$ is a sum of monomials.
\end{definition}

\begin{theorem}
There is an $O(mn^{3c})$ time algorithm to compute the integration
$\int_{[0,1]^d} F(x_1,\cdots, x_d)$ for a $c$-wide formula
$F(x_1,\cdots, x_d)=f_1\cdots f_m$, where $n$ is the total length of
$F$.
\end{theorem}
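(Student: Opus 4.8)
The plan is to give a left-to-right dynamic program over the factors $f_1, \ldots, f_m$ that integrates out each variable as soon as it can no longer appear, so that at every stage only a polynomially bounded collection of monomials must be carried. First I would record, for each variable $x_i$, the interval $[l_i, r_i]$ of factor indices in which it occurs; the $c$-wide hypothesis guarantees $r_i \le l_i + c - 1$. Call $x_i$ \emph{active after step $t$} if $l_i \le t < r_i$, i.e.\ it has already appeared but still occurs in some later factor, and \emph{dead after step $t$} if $r_i \le t$. The state maintained after step $t$ is the polynomial $P_t$ obtained from $f_1 \cdots f_t$ by integrating every dead variable over $[0,1]$; then $P_t$ is a polynomial in the active variables only, since the future variables do not yet occur and the dead ones have been eliminated.

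The recurrence is $P_0 = 1$ and $P_t = \int P_{t-1} f_t$, where the integral is taken over exactly those variables $x_i$ with $r_i = t$ (the variables whose last occurrence is $f_t$). Concretely I would expand the product $P_{t-1} f_t$ into monomials, collect equal monomials over the current variable set, and then for each variable $x_i$ dying at step $t$ replace a factor $x_i^{a}$ by the number $\int_0^1 x_i^a\,dx_i = 1/(a+1)$ and drop $x_i$. Correctness follows from the product rule for integration quoted in Section~\ref{inapp-integration-sec}: because a variable with $r_i = t$ does not occur in any of $f_{t+1}, \ldots, f_m$, the one-dimensional integral in $x_i$ commutes past those factors, so integrating it out immediately agrees with integrating it at the end. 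Iterating, $P_m$ has no remaining variables and equals the desired integral $\int_{[0,1]^d} F$.

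The crucial point for the running time is a bound on the number of monomials stored in $P_t$. If $x_i$ is active after step $t$ then $r_i > t$ together with $r_i \le l_i + c - 1$ forces $l_i \ge t - c + 2$, so every active variable first appears in one of the $c-1$ factors $f_{t-c+2}, \ldots, f_t$, and its exponent in $P_t$ is determined solely by the monomial choices made inside those factors. Hence the number of distinct active monomials in $P_t$ is at most the number of monomials in the product $f_{t-c+2} \cdots f_t$, which is at most the product of the numbers of terms of these factors and therefore at most $n^{c-1}$, since the total length of $F$ is $n$. Each transition multiplies $P_{t-1}$ by $f_t$ (introducing at most a further factor of $n$ monomials, so $O(n^c)$ intermediate terms), adds exponent vectors of length $O(n)$, eliminates the dying variables, and sorts to collect like terms; charging $\mathrm{poly}(n)$ for the rational arithmetic on the coefficients (whose denominators divide $\prod_i (a_i+1) \le (n+1)^n$ and hence have $O(n\log n)$ bits) and for the sort, each of the $m$ steps costs $n^{O(c)}$, giving the stated $O(m n^{3c})$ bound.

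The main obstacle is exactly this monomial count: naively expanding $F$ produces exponentially many monomials, and the argument only works because the window structure confines the exponents of the surviving variables to the last $c$ factors, pinning the number of live states to $n^{O(c)}$. A secondary point to verify carefully is that the rational coefficients stay of polynomial bit-length throughout, so that the arithmetic cost is genuinely absorbed into the $n^{O(c)}$ per-step factor rather than hiding an exponential blow-up; the denominator bound $(n+1)^n$ above is what makes this routine.
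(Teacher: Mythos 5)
Your proof is correct, but it reaches the result by a genuinely different route from the paper's. The paper uses divide and conquer: it splits $F$ into $F_1\, G\, F_2$ with $G$ a middle block of roughly $c$ consecutive factors, notes that $c$-wideness forces every variable to live entirely in $F_1$, entirely in $F_2$, or to touch $G$, recursively integrates $F_1$ and $F_2$ over their private variables (each recursive answer being a polynomial in the variables shared with $G$), and solves $T(m)=2T(m/2)+O(n^{3c})$ to get $O(mn^{3c})$. You instead run a single left-to-right sweep, integrating each variable out at its last occurrence and carrying the partially integrated polynomial $P_t$ in the active variables. Both arguments rest on the identical structural fact --- each variable's occurrences span at most $c$ consecutive factors, so the ``interface'' across any cut involves only the last $c-1$ factors and hence at most $n^{c-1}$ monomials --- and both give $m\cdot n^{O(c)}$ time. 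Your version has the advantage of making the size of the intermediate objects fully explicit: the paper asserts that $\int_{[0,1]^{|S_1|}}F_1\,d_{S_1}$ ``can be expressed as a polynomial of variables in $S$'' without bounding its number of terms, and that bound is exactly what your active-window count supplies; you also avoid the paper's somewhat opaque regrouping of $F_1,F_2$ into $F_1^*,F_2^*$. The only loose joint is the final accounting: the passage from ``$n^{O(c)}$ per stage'' to the specific exponent $3c$ deserves one explicit line (at most $n^{c}$ terms per stage, exponent vectors of length $O(n)$, and arithmetic on rationals of $O(n\log n)$ bits give roughly $n^{c+2}$ per stage, which sits inside $O(n^{3c})$), but this is no looser than the paper's own treatment of the same point.
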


\begin{proof}
Apply the divide and conquer method.
%Let $x_k$ be a variable that %appears in the middle formula $f_{m/2}$.
Convert $F$ into $F_1 G F_2$ such that $G$ is a product of at most
$c$ sub-formulas $f_i\cdots f_j$ with $j-i=c$ in the middle region
of $F$ (we can let $i=\ceiling{m-c\over 2}+1,$ and
$j=\ceiling{m-c\over 2}+c$).

Let $S_1$ be the set of variables that are only in $F_1$, $S_2$ be
the set of variables that are only in $F_2$,  and $S$ be the set of
variables that appear in $G$. The set of variables in $F$ is
partitioned into $S_1, S$, and $S_2$.

As $F_1=f_1\cdots f_{i-1}$, we convert $F_1$ into $F_1^*=f_1\cdots
f_{i-c}f_1^*$, where $f_1^*$ is the product of the last $c$
sub-formulas: $f_1^*=f_{i-c}\cdots f_{i-1}$. Similarly, as
$F_2=f_{j+1}\cdots f_{m}$, we convert $F_2$ into
$F_2^*=f_2^*f_{j+c}\cdots f_m$, where $f_2^*$ is the product of the
first $c$ sub-formulas: $f_2^*=f_{j+1}\cdots f_{j+c}$. Convert $G$
into the sum of products.

We have
\begin{eqnarray*}
\int_{[0,1]^d} F(x_1,\cdots, x_d)d_{x_1}\cdots d_{x_d}&=&
\int_{[0,1]^{|S|}} G\cdot (\int_{[0,1]^{|S_1|}} F_1d_{S_1}) \cdot
(\int_{[0,1]^{|S_2|}}
F_2d_{S_2})d_S\\
&=&\int_{[0,1]^{|S|}} G\cdot (\int_{[0,1]^{|S_1|}} F_1^*d_{S_1})
\cdot (\int_{[0,1]^{|S_2|}} F_2^*d_{S_2})d_S.
\end{eqnarray*}
 The integration
$\int_{[0,1]^{|S_1|}} F_1d_{S_1}$ can be expressed as a polynomial
of variables in $S$. The integration $\int_{[0,1]^{|S_2|}}
F_2d_{S_2}$ can be expressed as a polynomial of variables in $S$.

We have the recursive equation for the computational time
$T(m)=2T(m/2)+ O(n^{3c})$. This gives $T(m)=O(mn^{3c})$.
\end{proof}

\subsection{Tractable Derivative}

In this section, we show that computing the derivative of a class of
$\prod\sum$ polynomial is $\#P$-hard, and also give a polynomial
time randomized approximation scheme by using the theory of testing
monomials developed by Chen and Fu~\cite{ChenFu10a,ChenFu10c}.

\begin{definition}
Let $f(x_1,\cdots, x_d)=p_1(x_1,\cdots,x_d)\cdots
p_k(x_1,\cdots,x_d)$ be a $\prod\sum$ polynomial. If for each
$p_i(x_1,\cdots,x_d)$, each variable's coefficient is either $0$ or
$1$, then $f$ is called a $\prod\sum^*$ polynomial.
\end{definition}

We show that the derivative for a $\prod\sum^*$ polynomial has a
polynomial time approximation scheme.  Chen and Fu derived the
following theorem by a reduction from the number of perfect
matchings in a bipartite.

\begin{theorem}[Chen and Fu~\cite{ChenFu10c}]\label{FuChen-theorem}\scrod
\begin{enumerate}[1.]
\item
There is a polynomial time randomized algorithm to approximate the
coefficient of a $\prod\sum^*$ polynomial.
\item
It is $\#P$-hard to compute the coefficient of the multilinear
$x_1\cdots x_d$ in a $\prod\sum^*$ polynomial $f(x_1,\cdots, x_d)$.
\end{enumerate}
\end{theorem}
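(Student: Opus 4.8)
The plan is to reduce both parts of the theorem to a single algebraic fact: for a $\prod\sum^*$ polynomial the coefficient of the full multilinear monomial $x_1\cdots x_d$ is exactly a permanent of an associated $0/1$ matrix, after which part~2 becomes Valiant's hardness of the permanent and part~1 becomes an application of the permanent FPRAS. First I would establish this identity. Since each factor $p_i$ is a linear form with $0/1$ coefficients, write $p_i=\sum_{j=1}^d a_{ij}x_j$ with $a_{ij}\in\{0,1\}$. For the degree-$d$ monomial $x_1\cdots x_d$ to appear in the sum-of-products expansion of $f=p_1\cdots p_k$, the product must have degree exactly $d$, which forces $k=d$ and each factor to contribute exactly one linear term. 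Expanding, a term equal to $x_1\cdots x_d$ arises precisely by choosing a permutation $\sigma\in S_d$ and taking the summand $x_{\sigma(i)}$ from $p_i$; any non-injective choice repeats a variable and cannot produce the squarefree target. Hence the coefficient equals $\sum_{\sigma\in S_d}\prod_{i=1}^d a_{i\sigma(i)}={\rm per}(A)$, where $A=(a_{ij})$ is the biadjacency matrix of a bipartite graph $H$, so this number is the count of perfect matchings of $H$.

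For part~2 (hardness), I would turn this correspondence into a reduction. Given any bipartite graph $H$ with parts of size $d$ and biadjacency matrix $A\in\{0,1\}^{d\times d}$, construct in polynomial time the $\prod\sum^*$ polynomial $f=\prod_{i=1}^d\bigl(\sum_{j:a_{ij}=1}x_j\bigr)$. By the identity above, the coefficient of $x_1\cdots x_d$ in $f$ equals ${\rm per}(A)$, which is the number of perfect matchings of $H$. Since counting perfect matchings in a bipartite graph, equivalently evaluating the $0/1$ permanent, is $\#P$-complete by Valiant's theorem, computing this coefficient is $\#P$-hard.

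For part~1 (approximation), I would invoke the fully polynomial randomized approximation scheme of Jerrum, Sinclair and Vigoda for the permanent of a matrix with non-negative entries. The matrix $A$ extracted from a $\prod\sum^*$ polynomial is $0/1$ and hence non-negative, so the coefficient ${\rm per}(A)$ is a non-negative quantity to which that scheme applies directly; this yields a polynomial-time randomized $(1\pm\varepsilon)$-approximation of the coefficient of $x_1\cdots x_d$.

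The main obstacle I anticipate is not in the reduction but in part~1: the existence of the approximation rests entirely on the deep permanent FPRAS, which is a substantial external result, whereas the rest of the argument is the clean identification of the coefficient with a permanent. A secondary point to handle carefully is the degree bookkeeping that forces $k=d$ and eliminates every non-permutation choice of summands, since it is exactly this that makes the permanent identity hold on the nose and lets both the exact $\#P$-hardness and the approximation transfer verbatim to the coefficient.
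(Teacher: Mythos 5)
The paper does not prove this theorem at all---it imports it from Chen and Fu~\cite{ChenFu10c}, remarking only that it was ``derived by a reduction from the number of perfect matchings in a bipartite'' graph. Your proposal reconstructs exactly that route (identifying the coefficient of $x_1\cdots x_d$ with the permanent of the $0/1$ biadjacency matrix, then invoking Valiant's $\#P$-hardness for part~2 and the Jerrum--Sinclair--Vigoda permanent approximation scheme for part~1), so it is correct and follows essentially the same approach as the cited source.
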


Theorem~\ref{FuChen-theorem} implies
Theorem~\ref{derivative-app-theorem}.

\begin{theorem}\label{derivative-app-theorem}\scrod
\begin{enumerate}[1.]
\item
Let $\epsilon$ be an arbitrary constant in $(0,1)$. Then there is a
polynomial time randomized algorithm that given a $\prod\sum^*$
polynomial $f$, it returns a $(1+\epsilon)$-approximation for
${\partial f(x_1,\cdots, x_d)^{(d)}\over \partial x_1\cdots\partial
x_d}$ at the point $(0,\cdots,0)$.
\item
It is $\#P$-hard to compute ${\partial f(x_1,\cdots, x_d)^{(d)}\over
\partial x_1\cdots\partial x_d}$ at the point $(0,\cdots,0)$ for a $\prod\sum^*$
polynomial $f$.
\end{enumerate}
\end{theorem}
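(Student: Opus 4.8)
The plan is to deduce both parts of Theorem~\ref{derivative-app-theorem} from Theorem~\ref{FuChen-theorem} through a single algebraic identity: for any polynomial $f(x_1,\cdots,x_d)$, the value of the mixed partial derivative ${\partial f(x_1,\cdots,x_d)^{(d)}\over \partial x_1\cdots\partial x_d}$ at the origin equals the coefficient of the multilinear monomial $x_1\cdots x_d$ in the sum of product expansion of $f$. Establishing this identity is the only substantive step; everything else is an immediate invocation of the cited theorem, and crucially no reduction transforming $f$ is needed, since the class $\prod\sum^*$ in Theorem~\ref{FuChen-theorem} is exactly the class appearing here.

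First I would verify the identity by examining a single monomial $c\,x_1^{a_1}\cdots x_d^{a_d}$ of $f$ and tracking the effect of applying each of $\partial_{x_1},\cdots,\partial_{x_d}$ exactly once, followed by evaluation at $(0,\cdots,0)$. If some exponent $a_i=0$, then $\partial_{x_i}$ annihilates the term. If every $a_i\ge 1$ but some $a_i\ge 2$, then after the differentiations that variable still appears with a positive power, so the substitution $x_i=0$ forces the contribution to vanish. The only monomial that survives is the fully multilinear one, $x_1\cdots x_d$, whose image under the differentiation is the constant $c$ (since each $\partial_{x_i}x_i=1$). Summing over the monomials of $f$ shows that the derivative at the origin is exactly the coefficient of $x_1\cdots x_d$.

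Given the identity, Part~1 follows because approximating the derivative at the origin is literally the same computational task as approximating the coefficient of $x_1\cdots x_d$; Theorem~\ref{FuChen-theorem}(1) supplies a polynomial time randomized $(1+\epsilon)$-approximation for that coefficient, and the guarantee transfers verbatim. Likewise, Part~2 follows because exact evaluation of the derivative at the origin is exact computation of that same coefficient, which Theorem~\ref{FuChen-theorem}(2) shows to be $\#P$-hard. Thus there is no real obstacle once the identity is in place; the entire content of the theorem is the observation that the top mixed partial at the origin reads off the multilinear coefficient of a $\prod\sum^*$ polynomial.
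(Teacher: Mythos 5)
Your proposal is correct and follows essentially the same route as the paper: the paper's proof likewise reduces both parts to Theorem~\ref{FuChen-theorem} via the single observation that $\frac{\partial f^{(d)}(x_1,\cdots,x_d)}{\partial x_1\cdots\partial x_d}$ at the origin equals the coefficient of $x_1\cdots x_d$ in the sum-of-products expansion. Your monomial-by-monomial verification of that identity is merely a more explicit justification of a step the paper states without proof.
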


\begin{proof}
For a $\prod\sum^*$ polynomial $f(x_1,\cdots,x_d)$, its ${\partial
f(x_1,\cdots, x_d)^{(d)}\over
\partial x_1\cdots\partial x_d}$ at the point $(0,\cdots,0)$ is
identical to the coefficient of the monomial $x_1\cdots x_d$ in the
sum of products in the expansion of $f(x_1,\cdots,x_d)$. The theorem
follows from Theorem~\ref{FuChen-theorem}.
\end{proof}

%\subsection{Integration of $\prod\sum\Si_2$ Polynomial%: Collapsing NP to P
%}

\section{Conclusions}

Using the theory of NP-hardness, we characterize the intractability
of approximation for two fundamental mathematical operations:
Integration and derivative in high dimensional space. We may see
that this approach will be applied to determining the computational
complexity of more mathematics systems that involve integration and
derivative. We show that derivative for $\prod\sum\prod_2$ is
$\#P$-hard. Both integration and derivative for $\prod\sum\prod$
polynomials are in the class $\#P$. This shows that derivative is
not easier than integration in the high dimension.
%Those mathematical areas include differential equations,
%differential geometric, algebraic geometric, etc.

%\bibliographystyle{abbrv}
%\bibliography{bib}

\end{document}